\numberwithin{figure}{section}
\newtheorem{theorem}{Theorem}[section]
\newtheorem{lemma}{Lemma}[section]
\newtheorem{corollary}{Corollary}[section]
\newtheorem{proposition}{Proposition}[section]
\newtheorem{example}{Example}[section]
\newtheorem{remark}{Remark}[section]
\newtheorem{definition}{Definition}[section]
\date{}
\begin{document}
	\title{
  Encoding and Construction of Quantum Codes from $(\gamma,\Delta)$-cyclic Codes over a Class of Non-chain Rings
 }
	\author{\small{Om Prakash$^{* 1}$, Shikha Patel$^{2}$ and Habibul Islam$^{3}$} \\
		\small{$^{1,2}$Department of Mathematics} \\ \small{Indian Institute of Technology Patna} \\ \small{Bihta, Patna - 801 106, India} \\
 \small{$^{3}$ Department of Mathematics } \\ \small{Indian Institute of Information Technology Bhopal, India} \\
  \small{om@iitp.ac.in (*corresponding author), shikha\_1821ma05@iitp.ac.in, habibul.islam@iiitbhopal.ac.in}}

\maketitle

\begin{abstract}
\noindent Let $\mathbb{F}_q$ be a finite field of $q=p^m$ elements where $p$ is a prime and $m$ is a positive integer. This paper considers $(\gamma,\Delta)$-cyclic codes over a class of finite non-chain commutative rings $\mathscr{R}_{q,s}=\mathbb{F}_q[v_1,v_2,\dots,v_s]/\langle v_i-v_i^2,v_iv_j=v_jv_i=0\rangle$ where $\gamma$ is an automorphism of $\mathscr{R}_{q,s}$, $\Delta$ is a $\gamma$-derivation of $\mathscr{R}_{q,s}$ and $1\leq i\neq j\leq s$ for a positive integer $s$. Here, we show that a $(\gamma,\Delta)$-cyclic code of length $n$ over $\mathscr{R}_{q,s}$ is the direct sum of $(\theta,\Im)$-cyclic codes of length $n$ over $\mathbb{F}_q$, where $\theta$ is an automorphism of $\mathbb{F}_q$ and $\Im$ is a $\theta$-derivation of $\mathbb{F}_q$. Further, necessary and sufficient conditions for both $(\gamma,\Delta)$-cyclic and $(\theta,\Im)$-cyclic codes to contain their Euclidean duals are established. Then, we obtain many quantum codes by applying the dual containing criterion on the Gray images of these codes. These codes have better parameters than those available in the literature. Finally, the encoding and error-correction procedures for our proposed quantum codes are discussed.\\

\noindent\textbf{Keywords:} Skew polynomial rings, skew cyclic codes,  $(\sigma,\delta)$-cyclic codes, Gray map,\\  CSS construction, quantum codes.\\
\textbf{MSC (2020):} 12L10   $\cdot$  16Z05    $\cdot$ 94B05    $\cdot$ 94B35   $\cdot$ 94B15.
\end{abstract}


\section{Introduction}
After the pioneering work of Hammons et al. \cite{Hammons} in 1994, codes over finite rings attracted many researchers for better error-correcting codes. Later, several important research has been carried out over finite rings and explored plenty of suitable parameters; we refer \cite{Alahmadi21,Ashraf19,Dertli15,Gao19,Ma18}.
Nevertheless, all of these works have been considered over finite commutative rings. Hence, it is natural to look at these works over the noncommutative ring to obtain codes with better parameters. Towards this, in 2007, Boucher et al. \cite{Boucher1} introduced skew cyclic codes as a generalized class of cyclic codes using a non-trivial automorphism $\theta$ on a finite field $\mathbb{F}_q$. They proved that the noncommutative rings (skew polynomial rings) are worthy alphabets for producing new parameters. In addition, they have provided a few codes with better parameters that were not known earlier over finite commutative rings. The factorization of the polynomial $x^n-1$ plays an important role in the characterization of cyclic codes of length $n$ and more factorization leads to the case of getting many new codes with better parameters. Therefore, skew cyclic codes that generalize cyclic codes in a noncommutative setup attract many researchers. During 2010-2012,  Abualrub et al. \cite{Abualrub} and Bhaintwal \cite{Bhaintwal} introduced and developed some interesting results on skew-quasi cyclic codes. From an application point of view, recently, many authors have shown that skew cyclic codes are one of the important resources for producing new quantum codes along with classical codes \cite{Boucher1,Boucher11,Boucher2,Gursoy,Jitman,Siap}. \\
However, all of the above works have been carried out on skew polynomial rings of automorphism type. Only a few works are available in the literature with both automorphisms and derivations. In \cite{Boucher,MH,P,Luis}, the authors generalized the notion of codes over skew polynomial rings with non-trivial automorphism $\theta$ and $\theta$- derivation $\Im$ under the usual addition of polynomials and a specific polynomials multiplication involving $\theta$ and $\Im$. For the noncommutative ring $\mathbb{F}_q[x,\theta;\Im]$ where $\theta$ is the Frobenius automorphism $a\mapsto a^p$, $p$ is the characteristic of $\mathbb{F}_q$, the authors \cite{Boucher,Luis} defined the inner $\theta$-derivation $\Im$ induced by $\beta\in \mathbb{F}_q^*$ of the form $a\mapsto  \beta(\theta(a)-a)$. Further, Boulagouaz and Leroy \cite{MH} studied $(\sigma,\delta)$-codes with  $\sigma$-derivation induced by the ring element. Recently, Sharma and Bhaintwal \cite{Sharma} have studied skew cyclic codes over $\mathbb{Z}_4+u\mathbb{Z}_4,$ $u^2=1$ with both automorphism and inner derivation. In 2021, Ma et al. \cite{Li} studied  $(\sigma,\delta)$-skew quasi-cyclic codes over the ring $\mathbb{Z}_4+u\mathbb{Z}_4,$ $u^2=1$. Further, in 2021, Patel and Prakash \cite{Patel21} studied $(\theta,\delta_\theta)$-cyclic codes over the ring $\mathbb{F}_q[u,v]/\langle u^2-u, v^2-v, uv-vu \rangle$ via the decomposition method over $\mathbb{F}_q$. Here, we extend our previous work \cite{Patel21} to a more general structure and propose a fruitful application of $(\gamma,\Delta)$-cyclic codes in the context of quantum code construction. As per our survey, it is worth mentioning that this is the first article proposing an application of $(\gamma,\Delta)$-cyclic codes into quantum codes.\\
Quantum error-correcting codes play a significant role in protecting information against disturbances such as decoherence occurring in the channel. In this connection, in 1995, Shor \cite{Shor} discovered the first quantum code. After that, Calderbank et al. \cite{Calderbank98} provided a method to obtain quantum codes from classical codes. This technique became very popular among researchers and is known as the CSS (Calderbank-Shor-Steane) construction. Presently, quantum codes and their implementation from classical codes have gained significant attention. As a consequence, many quantum codes with better parameters have been constructed from different families of linear codes such as cyclic, skew cyclic, skew constacyclic codes, etc., see \cite{Alahmadi21,Ashraf19,Dertli15,Gao19,Li20,Ma18,Ozen19,Prakash}. However, the search for new methods on different structures is still ongoing by which one can construct quantum codes efficiently with suitable parameters. Since getting new quantum codes proportionally depends on the abundance of factors of $x^n-1$, many authors have been exploring quantum codes in the setting of the skew polynomial ring with automorphism where $x^n-1$ indeed possesses more factorization than the commutative case. Thus, in this work, we extend all these previous works in a new direction by considering skew polynomial rings with non-trivial automorphisms and nonzero derivations. Here, we use different derivations for the same Frobenius automorphism having the form $a\mapsto  \beta(\theta(a)-a)$ for all $\beta\in \mathbb{F}_q^*$.\\
The rest of the paper is structured as follows: In Section $2$, we present some basic results and notations that will be useful for later sections. In Section $3$, we discuss $(\theta,\Im)$-cyclic codes over $\mathbb{F}_q$ and derive a necessary and sufficient condition to contain their duals over $\mathbb{F}_q$. Further, Section $4$ includes the results on $(\gamma,\Delta)$-cyclic codes over $\mathscr{R}_{q,s}$ and dual-containing property for these codes as well. Section $5$ describes the applications of our obtained results by providing many new quantum codes with superior parameters. Finally, Section $6$ concludes our work.
\section{Preliminaries}
In this Section, we provide some preliminary results, definitions and notations which are used throughout this paper. We consider a finite non-chain ring $\mathscr{R}_{q,s}:=\mathbb{F}_q[v_1,v_2,\dots,v_s]/\langle v_i-v_i^2,v_iv_j=v_jv_i=0\rangle$ where $1\leq i\neq j\leq s$ and $s$ is a positive integer. This $\mathscr{R}_{q,s}$ is a class of finite commutative ring with unity for different values of $q$ and $s$. Further, $\mathscr{R}_{q,s}$ can also be represented in the form of $\mathscr{R}_{q,s}=\mathbb{F}_q+v_1\mathbb{F}_q+\cdots+v_s\mathbb{F}_q$ with $v_i-v_i^2,v_iv_j=v_jv_i=0$. Moreover, $\mathscr{R}_{q,s}$ is a non-chain semi-local Frobenius ring having $s + 1$ maximal ideals. For $s=2$, there are three maximal ideals $\langle v_1+v_2\rangle$, $\langle 1-v_1\rangle$ and $\langle 1-v_2\rangle$ in $\mathscr{R}_{q,2}$, refer \cite{Islam20b}. Consider
$$\zeta_0=\displaystyle\prod_{i=1}^{s}(1-v_i), ~~\text{and}~~ \zeta_j=v_j, ~~1\leq j\leq s.$$ It is easy to verify that $\sum_{i=0}^{s} \zeta_i=1$ and
\begin{equation*}
	\zeta_i\zeta_j=
	\begin{cases}
		\zeta_i, & \mbox if~ i= j\\
		0, & \mbox if ~i\neq j
	\end{cases}.
\end{equation*}
Hence, by Chinese Remainder Theorem, $\mathscr{R}_{q,s}=\zeta_0\mathscr{R}_{q,s}\oplus\zeta_1\mathscr{R}_{q,s}\oplus\cdots\oplus\zeta_s\mathscr{R}_{q,s}=\zeta_0\mathbb{F}_q\oplus\zeta_1\mathbb{F}_q\oplus\cdots\oplus\zeta_s\mathbb{F}_q$. Thus, we conclude that any element $t\in\mathscr{R}_{q,s}$ can be uniquely written as $t=\zeta_0t_0+\zeta_1t_1+\cdots+\zeta_st_s$, where $t_i\in \mathbb{F}_q$. Also, $t$ is a unit in $\mathscr{R}_{q,s}$ if and only if $t_i\in \mathbb{F}_q^*$ for all $i$. \\
Recall that a non-empty subset $\mathcal{C}$ of $\mathscr{R}_{q,s}^n$ is said to be a linear code of length $n$ over $\mathscr{R}_{q,s}$ if it is an $\mathscr{R}_{q,s}$-submodule of $\mathscr{R}_{q,s}^n$ and the elements of $\mathcal{C}$ are called codewords. The Hamming weight $w_H(c)$ of a codeword $c=(c_0,c_1,\dots,c_{n-1})\in \mathcal{C} $ is the number of nonzero coordinates in $c$. The Hamming distance between any two codewords $c$ and $c'$ of $\mathcal{C}$ is defined as $d_H(c,c')=w_H(c-c')$ and the Hamming distance of a linear code $\mathcal{C}$ is defined as $d_H(\mathcal{C})= \min \{ d_H(x,y)~|~ x, ~y\in \mathcal{C}, x \neq y \}.$ The Euclidean inner product of $c$ and $c'$ in $\mathcal{R}^n$ is defined by $c\cdot c' = \sum_{i=0}^{n-1}c_ic'_i$ where $c=(c_0,c_1,\dots,c_{n-1})$ and $c'=(c'_0,c'_1,\dots,c'_{n-1})$	are codewords in $\mathcal{C}$. The dual code of $\mathcal{C}$ is defined by  $\mathcal{C}^\perp= \{c\in \mathscr{R}_{q,s}^n~|~ c\cdot c' = 0, ~\text{for all} ~c'\in \mathcal{C} \}$. Also, a linear code $\mathcal{C}$ is self-orthogonal if $\mathcal{C}\subseteq \mathcal{C}^\perp$ and self-dual if $\mathcal{C}=\mathcal{C}^\perp$. Further, let $c=(c_0,c_1,\dots,c_{n-1})\in\mathcal{C}\subseteq  \mathbb{F}_q^n$. If $\mathcal{C}$ is an $[n,k,d]$ linear code, then from the Singleton bound, its minimum distance is bounded above by $d \leq n - k + 1$, where $d$ is the minimum distance, $k$ is the dimension, and $n$ is the length of the code. A code achieving the mentioned bound is called maximum-distance-separable (MDS). If the minimum distance of the code is one unit less than the MDS, then the code is called almost MDS. A linear code is said to be optimal if it has the highest possible minimum distance for a given length and dimension.
\begin{definition}
	Let $\mathscr{R}_{q,s}$ be a finite ring and $\gamma$ be an automorphism of $\mathscr{R}_{q,s}$. Then a map $\Delta : \mathscr{R}_{q,s} \rightarrow \mathscr{R}_{q,s}$ is said to be a $\gamma$-derivation of $\mathscr{R}_{q,s}$ if
	\begin{enumerate}
		\item $\Delta(x+y)=\Delta(x)+\Delta(y)$;
		\item $\Delta(xy)=\Delta(x)y+\gamma(x)\Delta(y)$
	\end{enumerate} for all $x,y\in \mathscr{R}_{q,s}.$
\end{definition}
Let us consider an automorphism $\theta:\mathbb{F}_q\rightarrow \mathbb{F}_q $ defined by $\theta(a)=a^q$, for all $a\in \mathbb{F}_q$ and a $\theta$-derivation $\Im:\mathbb{F}_q\rightarrow \mathbb{F}_q $ defined by $\Im(a)=\beta(\theta(a)-a)$, for all $a\in \mathbb{F}_q$ and $\beta\in \mathbb{F}_q^*$. Now, we extend the above maps over $\mathscr{R}_{q,s}$ and define the skew polynomial ring with both automorphism and derivation over $\mathscr{R}_{q,s}$. Let $Aut(\mathscr{R}_{q,s})$ be the set of all automorphism of $\mathscr{R}_{q,s}$ and $\gamma\in Aut(\mathscr{R}_{q,s})$. We consider the set
$$\mathscr{R}_{q,s}[x;\gamma, \Delta]=\{b_lx^l+\cdots+b_1x+b_0~|~ b_i\in \mathscr{R}~ \text{and}~ l\in \mathbb{N}\},$$ where $\Delta$ is a $\gamma$-derivation of $\mathscr{R}_{q,s}$. Then $\mathscr{R}_{q,s}[x;\gamma, \Delta]$ is a noncommutative ring unless $\gamma$ is the identity under the usual addition of polynomials and multiplication is defined with respect to $xb=\gamma(b)x+\Delta(b)$ for $b\in \mathscr{R}_{q,s}$, known as a skew polynomial ring.
\begin{definition}
	An element $f(x)\in\mathscr{R}_{q,s}[x;\gamma, \Delta]$ is said to be a central element of $\mathscr{R}_{q,s}[x;\gamma, \Delta]$ if $f(x)b(x)=b(x)f(x),$ for all $b(x)\in \mathscr{R}_{q,s}[x;\gamma, \Delta]$.
\end{definition}
\begin{definition} \cite{N,Leroy}
	A pseudo-linear transformation $T_{\gamma,\Delta}: \mathscr{R}_{q,s}^n \rightarrow \mathscr{R}_{q,s}^n$ is an additive map defined by
	\begin{equation}\label{eq1}
		T_{\gamma,\Delta}(v)= \gamma(v)M + \Delta(v),
	\end{equation}
	where $v= (v_1,v_2,\dots,v_n)\in \mathscr{R}_{q,s}^n$, $\gamma(v)= (\gamma(v_1),\gamma(v_2),\dots,\gamma(v_n))\in \mathscr{R}_{q,s}^n$, $M$ is a matrix of order $n\times n$ over $\mathscr{R}_{q,s}$ and $\Delta(v)= (\Delta(v_1),\Delta(v_2),\dots,\Delta(v_n))\in \mathscr{R}_{q,s}^n$. If $\Delta=0$, then $T_{\gamma}$ is known as \textit{semi-linear transformation}.
\end{definition}
\begin{definition}
	\begin{enumerate}
		\item A code $\mathcal{C}$ of length $n$ over $\mathscr{R}_{q,s}$ is said to be a $(\gamma, \Delta)$-linear code if it is a left $\mathscr{R}_{q,s}[x;\gamma, \Delta]$-submodule of $\frac{\mathscr{R}_{q,s}[x;\gamma, \Delta]}{\langle x^n-1 \rangle}$. Moreover, if $x^n-1$ is a central element of $\mathscr{R}_{q,s}[x;\gamma, \Delta]$, then $\mathcal{C}$ is a central $(\gamma, \Delta)$-linear code.
		\item A code $\mathcal{C}$ of length $n$ over $\mathscr{R}_{q,s}$ is said to be a $(\gamma, \Delta)$-cyclic code if
		\begin{itemize}
			\item $\mathcal{C}$ is a $(\gamma, \Delta)$-linear code;
			\item $T_{\gamma,\Delta}(\mathcal{C})\subseteq \mathcal{C}$, where $T_{\gamma,\Delta}$ is as defined in Equation (\ref{eq1}) and $M$ is defined as
			$$M=\begin{pmatrix}
				0 & 1  & \dots & 0\\
				\vdots & \vdots & \ddots & \vdots\\
				0 & 0  & \dots & 1\\
				1 & 0 & \dots & 0\\
			\end{pmatrix}.$$
		\end{itemize}
	\end{enumerate}
\end{definition}
\begin{remark}\cite[Exercise~20]{KR}\label{re1}
	Let $\mathscr{R}_{q,s}[x;\gamma, \Delta]$ be a skew polynomial ring, $r\in \mathscr{R}_{q,s}$ and $n\in \mathbb{N}$. Then
	$$x^nr=\gamma^n(r)x^n+a_{n-1}x^{n-1}+\cdots+a_1x+\Delta^n(r),$$ for some $a_{n-1},\dots, a_1\in\mathscr{R}_{q,s}$.
\end{remark}
To find generator polynomials of $( \gamma, \Delta)$-cyclic codes over $\mathscr{R}_{q,s},$, first we derive the right division algorithm in $\mathscr{R}_{q,s}[x;\gamma, \Delta]$.
\begin{theorem}( The Right Division Algorithm)\label{th algo}
	Let $f(x),~g(x)\in \mathscr{R}_{q,s}[x;\gamma, \Delta]$ such that the leading coefficient of $g(x)$ be a unit. Then there exist $q(x),~r(x)\in \mathscr{R}_{q,s}[x;\gamma, \Delta]$ such that
	$$f(x)=q(x)g(x)+r(x),$$ where $r(x)=0$ or $\deg~r(x)<\deg~g(x)$.
\end{theorem}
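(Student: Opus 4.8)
The plan is to mimic the classical polynomial division argument, but to carry it out by induction on $\deg f(x)$ while carefully tracking the non-commutative multiplication rule $xb=\gamma(b)x+\Delta(b)$. If $f(x)=0$ or $\deg f(x)<\deg g(x)$, there is nothing to prove: take $q(x)=0$ and $r(x)=f(x)$. This settles the base case, so the entire content of the argument lies in the situation $\deg f(x)=l\geq m=\deg g(x)$.

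For the inductive step, I would write $f(x)=a_lx^l+\cdots$ and $g(x)=b_mx^m+\cdots$, where by hypothesis the leading coefficient $b_m$ is a unit. The key observation is that, since $\gamma$ is an automorphism, $\gamma^{l-m}(b_m)$ is again a unit; hence I may set $c=a_l\,\gamma^{l-m}(b_m)^{-1}\in\mathscr{R}_{q,s}$ and consider the single-step reduction $f_1(x)=f(x)-c\,x^{l-m}g(x)$. To see that this strictly lowers the degree, I would invoke Remark \ref{re1}: the identity $x^{l-m}b_m=\gamma^{l-m}(b_m)x^{l-m}+(\text{lower order terms})$ shows that the degree-$l$ term of $c\,x^{l-m}g(x)$ is exactly $c\,\gamma^{l-m}(b_m)x^l=a_lx^l$, which cancels the leading term of $f(x)$. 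Consequently $f_1(x)=0$ or $\deg f_1(x)<l$.

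Once the degree has dropped, the induction hypothesis applies to $f_1(x)$, yielding $q_1(x),r(x)\in\mathscr{R}_{q,s}[x;\gamma,\Delta]$ with $f_1(x)=q_1(x)g(x)+r(x)$ and $r(x)=0$ or $\deg r(x)<m$. Substituting back gives $f(x)=\bigl(c\,x^{l-m}+q_1(x)\bigr)g(x)+r(x)$, so that $q(x)=c\,x^{l-m}+q_1(x)$ together with the same $r(x)$ satisfy the required conclusion. If instead $f_1(x)=0$, the reduction already terminates with $q(x)=c\,x^{l-m}$ and $r(x)=0$.

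I expect the only delicate point to be the degree-$l$ coefficient computation in the reduction step: because left multiplication by $x^{l-m}$ twists coefficients through $\gamma^{l-m}$ and introduces a cascade of lower-order $\Delta$-terms, one must confirm via Remark \ref{re1} that no unexpected degree-$l$ contribution survives and that the chosen scalar $c$ cancels the leading term precisely. This is exactly where the hypothesis that $b_m$ is a unit is used, since it guarantees the invertibility of $\gamma^{l-m}(b_m)$ needed to define $c$; it is also the structural feature that makes \emph{right} division (as opposed to two-sided division) the canonical statement in this noncommutative setting.
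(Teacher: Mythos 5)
Your proposal is correct and follows essentially the same route as the paper: induction on $\deg f(x)$ with the single-step reduction $f(x)-f_s\,\gamma^{s-t}(g_t^{-1})\,x^{s-t}g(x)$ (your $c\,x^{l-m}g(x)$ with $c=a_l\,\gamma^{l-m}(b_m)^{-1}$ is the identical expression, since $\gamma^{l-m}(b_m^{-1})=\gamma^{l-m}(b_m)^{-1}$), using Remark \ref{re1} to confirm the leading-term cancellation, then applying the induction hypothesis and reassembling $q(x)=c\,x^{l-m}+q_1(x)$. The only difference is presentational: you isolate the coefficient computation cleanly, whereas the paper expands all the $\Delta$-cascade terms explicitly.
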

\begin{proof}
	
	If $f(x)=0$, then the result follows by taking $q(x),~r(x)=0$. If $\deg~f(x)<\deg~g(x)$, then we take $q(x)=0$ and $r(x)=f(x)$.
	Furthermore, for $\deg f(x)\geq \deg g(x)$, we prove it by induction on $\deg f(x)$.\\
	It can be easily seen that the result is true for $\deg f(x)=0$. Now, suppose the result is true for all polynomials of degree less than $\deg~f(x)$. Let $f(x)=f_0+f_1x+\cdots+f_sx^s$ and $g(x)=g_0+g_1x+\cdots+g_tx^t$ be two polynomials in $\mathscr{R}_{q,s}[x;\gamma, \Delta]$ such that $f_s\neq0$ and $g_t$ is a unit. Consider a polynomial
	$$h(x)=f(x)-f_s\gamma^{s-t}(g_t^{-1})x^{s-t}g(x).$$ From Remark \ref{re1}, $h(x)$ can be written as
    \allowdisplaybreaks
	\begin{align*}
		h(x)=&f(x)-f_s\gamma^{s-t}(g_t^{-1})x^{s-t}g(x)
		=f(x)-f_s\gamma^{s-t}(g_t^{-1})x^{s-t}(g_0+g_1x+\cdots+g_tx^t)\\
		=&f(x)-f_s\gamma^{s-t}(g_t^{-1})x^{s-t}g_0-f_s\gamma^{s-t}(g_t^{-1})x^{s-t}g_1x-\cdots-f_s\gamma^{s-t}(g_t^{-1})x^{s-t}g_tx^t\\
		=&f(x)-f_s\gamma^{s-t}(g_t^{-1})(\gamma^{s-t}(g_0)x^{s-t}+a_{s-t-1}x^{s-t-1}+\cdots+a_1x+\Delta^{s-t}(g_0))\\&
		-\cdots-f_s\gamma^{s-t}(g_t^{-1})(\gamma^{s-t}(g_t)x^{s-t}+b_{s-t-1}x^{s-t-1}+\cdots+b_1x+\Delta^{s-t}(g_t))x^t\\
		=&f(x)-f_s\gamma^{s-t}(g_t^{-1})(\gamma^{s-t}(g_0)x^{s-t}+a_{s-t-1}x^{s-t-1}+\cdots+a_1x+\Delta^{s-t}(g_0))\\&
		 -\cdots-f_s\gamma^{s-t}(g_t^{-1})\gamma^{s-t}(g_t)x^{s-t}x^t-f_s\gamma^{s-t}(g_t^{-1})b_{s-t-1}x^{s-t-1}x^t-\cdots-\\&f_s\gamma^{s-t}(g_t^{-1})\Delta^{s-t}(g_t)x^t\\
		=&f(x)-f_s\gamma^{s-t}(g_t^{-1})(\gamma^{s-t}(g_0)x^{s-t}+a_{n-1}x^{s-t-1}+\cdots+a_1x+\Delta^{s-t}(g_0))\\&
		-\cdots-f_sx^{s}-f_s\gamma^{s-t}(g_t^{-1})b_{s-t-1}x^{s-1}-\cdots-f_s\gamma^{s-t}(g_t^{-1})\Delta^{s-t}(g_t)x^t\\
		 =&f_0+f_1x+\cdots+f_sx^s-f_s\gamma^{s-t}(g_t^{-1})(\gamma^{s-t}(g_0)x^{s-t}+a_{s-t-1}x^{s-t-1}+\cdots\\&+a_1x+\Delta^{s-t}(g_0))-\cdots-f_sx^{s}-f_s\gamma^{s-t}(g_t^{-1})b_{s-t-1}x^{s-1}-\cdots-f_s\gamma^{s-t}(g_t^{-1})\\&\Delta^{s-t}(g_t)x^t\\
		 =&f_0+f_1x+\cdots+f_{s-1}x^{s-1}-f_s\gamma^{s-t}(g_t^{-1})(\gamma^{s-t}(g_0)x^{s-t}+a_{s-t-1}x^{s-t-1}\\&+\cdots+a_1x+\Delta^{s-t}(g_0))-\cdots-f_s\gamma^{s-t}(g_t^{-1})b_{s-t-1}x^{s-1}-\cdots-f_s\gamma^{s-t}(g_t^{-1})\\&\Delta^{s-t}(g_t)x^t
	\end{align*} where $a_1,a_2,\dots,a_{s-t-1},b_1,b_2,\dots,b_{s-t-1}\in \mathscr{R}_{q,s}$. Now, we can conclude that $\deg~h(x)<\deg~f(x)$. Hence, by induction on $\deg~h(x),$ there exist $b(x), r(x)\in \mathscr{R}_{q,s}[x;\gamma, \Delta]$ such that
	$$h(x)=b(x)g(x)+r(x),$$ where  $r(x)=0$ or $\deg~r(x)<\deg~g(x)$. Thus,
	\begin{align*}
		f(x)&=h(x)+f_s\gamma^{s-t}(g_t^{-1})x^{s-t}g(x) \\
		&=b(x)g(x)+r(x)+f_s\gamma^{s-t}(g_t^{-1})x^{s-t}g(x)\\
		&=(b(x)+f_s\gamma^{s-t}(g_t^{-1})x^{s-t})g(x)+r(x)\\
		&=q(x)g(x)+r(x),
	\end{align*}
	where $q(x)=b(x)+f_s\gamma^{s-t}(g_t^{-1})x^{s-t}\in  \mathscr{R}_{q,s}[x;\gamma, \Delta]$ and $r(x)=0$ or $\deg~r(x)<\deg~g(x)$. This gives the required result.
\end{proof}
Similarly,	one can define the left division algorithm. In above Theorem \ref{th algo}, if $r(x)=0,$ then $g(x)$ is called a right divisor of $f(x)$ or $f(x)$ is a left multiple of $g(x)$ in $\mathscr{R}_{q,s}[x;\gamma, \Delta]$. Throughout this paper, we consider the right division.

\section{ $(\theta,\Im)$-cyclic codes over $\mathbb{F}_q$}
This section presents the algebraic properties of $(\theta,\Im)$-cyclic codes in $R=\mathbb{F}_q[x;\theta,\Im]$ and provide a necessary and sufficient condition for these codes to contain their Euclidean duals. In \cite{MH}, Boulagouaz and Leroy introduced the notion of $(f,\gamma,\Delta)$-cyclic codes. Moreover, a $(\theta,\Im)$-cyclic code $\mathcal{C}$ is the subset of $\mathbb{F}_q^n$ consisting of the
coordinates of the elements of $Rg(x)/\langle x^n-1\rangle$ in the basis $\{1,x,\dots,x^{n-1}\}$ for some right
monic factors $g(x)$ of $x^n-1$.
\begin{theorem}\label{thm1}
	Let $g(x)=g_0+g_1x+\cdots+
	g_rx^r\in R$ be a monic polynomial.
	\begin{enumerate}
		\item A $(\theta,\Im)$-cyclic code of length $n$ corresponding to $Rg(x)/\langle x^n-1\rangle$ is a free left $\mathbb{F}_q$-module of dimension $n-r$ where $r=\deg~g(x)$.
		\item If $v=(v_0,v_1,\dots,v_{n-1})\in \mathcal{C},$ then $T_{\theta,\Im}(v)\in \mathcal{C}$.
		\item The rows of the matrix which generates the code $\mathcal{C}$ are given by
		$$T_{\theta,\Im}^k(g_0,g_1,\dots,g_r,0,0,\dots,0),~~~~\text{for}~0\leq k\leq n-r-1.$$
	\end{enumerate}
\end{theorem}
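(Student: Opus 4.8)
The plan is to reduce all three parts to a single identification: on coordinate vectors, the pseudo-linear transformation $T_{\theta,\Im}$ is exactly left multiplication by $x$ in the quotient module $R/\langle x^n-1\rangle$. Concretely, I would identify a vector $v=(v_0,\dots,v_{n-1})\in\mathbb{F}_q^n$ with the polynomial $v(x)=\sum_{j=0}^{n-1}v_jx^j$ and compute $x\cdot v(x)$ modulo $x^n-1$. Applying the commutation rule $xb=\theta(b)x+\Im(b)$ term by term gives $x\,v(x)=\sum_j\theta(v_j)x^{j+1}+\sum_j\Im(v_j)x^j$; reducing the top term via $x^n\equiv 1$ produces the wrapped-around coefficient $\theta(v_{n-1})$ in the constant position. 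Writing out the resulting coefficients and comparing with $\theta(v)M+\Im(v)$, where $M$ is the cyclic shift matrix of the definition, I would verify coordinate-by-coordinate that the $j$-th entry of both equals $\theta(v_{j-1\bmod n})+\Im(v_j)$. This computation is the technical heart of the argument; the delicate points are handling the wrap-around term correctly and noting that left multiplication by $x$ genuinely descends to the quotient, since the left ideal $R(x^n-1)$ is stable under left multiplication and $x^n\equiv 1$ there.

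For part (1), I would exhibit the explicit free basis $\{g(x),xg(x),\dots,x^{n-r-1}g(x)\}$ of the code. Linear independence over $\mathbb{F}_q$ follows because, by Remark \ref{re1} and monicity of $g$, the polynomial $x^ig(x)$ has leading term $x^{r+i}$, so these $n-r$ polynomials have pairwise distinct degrees $r,r+1,\dots,n-1$; a left $\mathbb{F}_q$-combination summing to zero then forces all coefficients to vanish by descending on degree. For spanning, I would use that $g(x)$ is a monic right divisor, say $x^n-1=h(x)g(x)$, so $R(x^n-1)\subseteq Rg(x)$. Given any $c(x)=p(x)g(x)\in Rg(x)$, the right division algorithm (Theorem \ref{th algo}) by the monic $x^n-1$ yields a remainder $\rho(x)$ of degree $<n$ with $\rho(x)=(p(x)-Q(x)h(x))g(x)$; hence $\rho(x)=\tilde{p}(x)g(x)$ with $\deg\tilde{p}<n-r$, and expanding $\tilde{p}(x)=\sum_{i}a_ix^i$ expresses $\rho(x)$ as the left combination $\sum_i a_i\,x^ig(x)$. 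This shows the code is free of dimension $n-r$.

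Parts (2) and (3) then follow immediately. Part (2) holds because the code is the image of the left ideal $Rg(x)$, hence closed under left multiplication by $x$; by the identification of the first paragraph, this closure is precisely $T_{\theta,\Im}(\mathcal{C})\subseteq\mathcal{C}$. For part (3), the coordinate vector of $g(x)$ is $(g_0,g_1,\dots,g_r,0,\dots,0)$, and since left multiplication by $x$ corresponds to $T_{\theta,\Im}$, the coordinate vector of $x^kg(x)$ is $T_{\theta,\Im}^k(g_0,g_1,\dots,g_r,0,\dots,0)$; for $0\le k\le n-r-1$ these stay of degree $\le n-1$, so no reduction is needed, and by part (1) they are exactly a basis of $\mathcal{C}$, i.e. the rows of a generator matrix.

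The main obstacle I anticipate is the first paragraph's coordinate computation: one must track the interaction of the automorphism, the derivation, and the modular reduction $x^n\equiv 1$ simultaneously, and confirm that the matrix $M$ in the definition of $T_{\theta,\Im}$ is aligned with the indexing produced by $x\cdot v(x)$. Everything else is bookkeeping once this identification is in place.
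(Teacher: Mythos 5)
Your proposal is correct, and parts (2) and (3) run essentially as in the paper; the genuine difference lies in part (1) and in how much is verified rather than asserted. The paper never constructs a basis to prove (1): from $x^n-1=h(x)g(x)$ it obtains the left $R$-module isomorphism $Rg(x)/\langle x^n-1\rangle\cong R/\langle h(x)\rangle$ and quotes the standard fact that $R/\langle h(x)\rangle$ is a free left $\mathbb{F}_q$-module of rank $\deg h(x)=n-r$ since $h$ is monic; the family $g,xg,\dots,x^{n-r-1}g$ appears only in its proof of (3), where independence is asserted as clear and basis-ness then follows from the dimension already computed in (1). You run the logic in the opposite direction: independence via the distinct-degree/monicity argument, and spanning via right division of $p(x)g(x)$ by $x^n-1$, writing the remainder as $\tilde{p}(x)g(x)$ with $\deg \tilde{p}<n-r$ --- which is precisely the step the paper leaves implicit --- and then the dimension $n-r$ is simply read off. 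Likewise, the paper merely asserts that left multiplication by $x$ on $R/\langle x^n-1\rangle$ corresponds to the action of $T_{\theta,\Im}$ on $\mathbb{F}_q^n$ (this is Leroy's pseudo-linear formalism it cites), whereas you verify it coordinatewise, checking that the $j$-th entry $\theta(v_{j-1\bmod n})+\Im(v_j)$ of $x\,v(x)$ reduced modulo $x^n-1$ matches $\theta(v)M+\Im(v)$ for the shift matrix $M$ of the definition. The paper's route is shorter and yields a statement that generalizes at once (freeness of $R/\langle h\rangle$ for any monic $h$); yours is more elementary and self-contained, using only Theorem \ref{th algo} and Remark \ref{re1}, which are already established in the paper.
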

\begin{proof}
	\begin{enumerate}
		\item We have $x^n-1=h(x)g(x)$ for some monic polynomials $h(x)\in R$. Hence, as left $R$-module, we have $Rg(x)/\langle x^n-1\rangle\cong R/\langle h(x)\rangle$. Since $h$ is monic, $R/\langle h(x)\rangle$ is a free left $\mathbb{F}_q$-module of rank $deg~h(x)=n-r$.
		\item $v=(v_0,v_1,\dots,v_{n-1})\in \mathcal{C}$ if and only if $v(x):=\sum_{i=0}^{n-1} v_{i}x^{i}+\langle x^n-1\rangle\in Rg(x)/\langle x^n-1\rangle$. Since $xv(x)\in Rg(x)/\langle x^n-1\rangle$ and left multiplication by $x$ on $R/\langle x^n-1\rangle$ corresponds to the action of $T_{\theta,\Im}$ on $\mathbb{F}_q^n$, we have $T_{\theta,\Im}(v)\in \mathcal{C}$.
		\item We have $T_{\theta,\Im}^k(v_0,v_1,\dots,v_{n-1})\in \mathcal{C}$ for any $k\geq 0$. On the other hand, it is clear that $g, xg, x^2g,\dots, x^{n-r-1}g$ are left linearly independent over $\mathbb{F}_q$, all are taken modulo $x^n-1$ and hence form a basis of $Rg(x)/\langle x^n-1\rangle$. In codewords representation, this implies that the vectors $T_{\theta,\Im}^k(g_0,g_1,\dots,g_r,0,\dots,0)$ form a left $\mathbb{F}_q$-basis for $\mathcal{C}$, $0\leq k\leq n-r-1.$
	\end{enumerate}
\end{proof}
\begin{theorem}\label{l31}
	Let $\mathcal{C}$ be a left $R$-submodule of $R/\langle x^n-1\rangle$. Then $\mathcal{C}$ is a $(\theta,\Im)$-cyclic submodule generated by a monic polynomial of the smallest degree in $\mathcal{C}$.
\end{theorem}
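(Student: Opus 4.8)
The plan is to mimic the classical argument for cyclic codes, replacing ordinary polynomial division by the right division algorithm of Theorem~\ref{th algo}. If $\mathcal{C}=\{0\}$ the statement is vacuous, so I would assume $\mathcal{C}\neq\{0\}$ and work with the canonical representatives of degree $<n$ of its elements. Among all nonzero elements of $\mathcal{C}$, I would choose one of least degree; since $\mathbb{F}_q$ is a field its leading coefficient is a unit, and left-multiplying by the inverse of that coefficient (an operation under which $\mathcal{C}$ is closed, being a left $R$-submodule) produces a \emph{monic} polynomial $g(x)\in\mathcal{C}$ of the smallest degree, say $r=\deg g(x)$ with $r<n$.

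The core step is to verify $\mathcal{C}=Rg(x)/\langle x^n-1\rangle$. The inclusion $Rg(x)/\langle x^n-1\rangle\subseteq\mathcal{C}$ is immediate because $g(x)\in\mathcal{C}$ and $\mathcal{C}$, as a left $R$-submodule, absorbs left multiplication by every element of $R$. For the reverse inclusion, I would take any $f(x)\in\mathcal{C}$; since $g(x)$ is monic its leading coefficient is a unit, so Theorem~\ref{th algo} applies and yields $q(x),r'(x)\in R$ with $f(x)=q(x)g(x)+r'(x)$ and $r'(x)=0$ or $\deg r'(x)<r$. Reducing modulo $x^n-1$, the product $q(x)g(x)$ lies in $\mathcal{C}$, whence $r'(x)=f(x)-q(x)g(x)\in\mathcal{C}$; as $\deg r'(x)<r=\deg g(x)$, the minimality of $r$ forces $r'(x)=0$, so $f(x)=q(x)g(x)\in Rg(x)/\langle x^n-1\rangle$.

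To match the definition of a $(\theta,\Im)$-cyclic code recalled before Theorem~\ref{thm1}, I would also check that $g(x)$ is a right monic factor of $x^n-1$. Applying the right division algorithm to $x^n-1$ gives $x^n-1=\tilde q(x)g(x)+\tilde r(x)$ with $\tilde r(x)=0$ or $\deg\tilde r(x)<r$; modulo $x^n-1$ this reads $\tilde r(x)\equiv-\tilde q(x)g(x)$, so $\tilde r(x)\in\mathcal{C}$, and minimality again forces $\tilde r(x)=0$. Hence $g(x)$ right-divides $x^n-1$ and $\mathcal{C}$ is a genuine $(\theta,\Im)$-cyclic submodule generated by the monic polynomial $g(x)$ of smallest degree. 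Uniqueness comes as a quick bonus: if $g'(x)$ were another monic degree-$r$ element of $\mathcal{C}$, then $g(x)-g'(x)\in\mathcal{C}$ would have degree $<r$, forcing $g(x)=g'(x)$.

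The two divisions and the submodule closure are routine; the one point demanding care—and the step I expect to be the main obstacle—is the bookkeeping between $R$ and the quotient $R/\langle x^n-1\rangle$. Specifically, I must ensure that the remainders produced in $R$ already have degree $<n$ so that they represent honest elements of $\mathcal{C}$, and that passing to the quotient preserves the degree comparisons on which the whole minimality argument rests.
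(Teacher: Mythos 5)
Your proof is correct and follows essentially the same route as the paper's: choose a monic polynomial of least degree in $\mathcal{C}$, apply the right division algorithm (Theorem \ref{th algo}) to an arbitrary element, and use closure of the left $R$-submodule together with minimality to force the remainder to vanish. The extra material you supply --- the explicit normalization to a monic generator, the verification that $g(x)$ right-divides $x^n-1$, and uniqueness --- goes beyond the paper's proof of this statement, but is sound; in fact the divisibility claim is precisely what the paper establishes, by the same division-plus-minimality argument, in the theorem immediately following this one.
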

\begin{proof}
	Let $g(x)\in \mathcal{C}$ be a monic smallest degree polynomial among nonzero polynomials in $\mathcal{C}$ and  $c(x)\in \mathcal{C}$. Then by Theorem \ref{th algo}, there exist unique polynomials $q(x)$ and $r(x)$ in $R$ such that
	$c(x) = q(x)g(x) + r(x)$ where $r(x)=0~ \text{or}~ \deg r(x)<\deg g(x).$ As $\mathcal{C}$ is a left $R$-submodule, we have $r(x)=c(x)-q(x)g(x)\in \mathcal{C}$. This is a contradiction to the assumption that $g(x)$ is of the smallest degree in $\mathcal{C}$ unless $r(x)=0$. This implies
	$c(x) = q(x)g(x)$ and hence $\mathcal{C}$ is a $(\theta,\Im)$-cyclic submodule generated by $g(x)$.
\end{proof}
\begin{theorem}
	Let $\mathcal{C}=\langle g(x) \rangle$ be a left $R$-submodule of $R/\langle x^n-1\rangle$, where $g(x)$ is a monic polynomial of smallest degree in $\mathcal{C}$. Then $g(x)$ is a right divisor of $x^n-1$.
\end{theorem}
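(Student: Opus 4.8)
The plan is to use the Right Division Algorithm (Theorem \ref{th algo}) to divide $x^n-1$ by $g(x)$ and then invoke the minimality of $\deg g(x)$ to force the remainder to vanish. Since $g(x)$ is monic, its leading coefficient is $1$, which is a unit, so the hypotheses of Theorem \ref{th algo} are met. Hence there exist $q(x),r(x)\in R$ with
\[
x^n-1=q(x)g(x)+r(x),\qquad r(x)=0 \text{ or } \deg r(x)<\deg g(x).
\]
The goal is then to show $r(x)=0$.

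The key step is to reduce the above identity modulo $x^n-1$ and interpret the result inside $\mathcal{C}$. Working in $R/\langle x^n-1\rangle$, the left-hand side is $0$, so $r(x)\equiv -q(x)g(x)\pmod{x^n-1}$. Since $\mathcal{C}=\langle g(x)\rangle$ is a left $R$-submodule generated by $g(x)$, the class of $-q(x)g(x)$ lies in $\mathcal{C}$; therefore the class of $r(x)$ lies in $\mathcal{C}$ as well. Here I would note that because $g(x)$ is the smallest-degree nonzero element of a code contained in $R/\langle x^n-1\rangle$, we have $\deg g(x)\le n-1$, and consequently any candidate remainder satisfies $\deg r(x)<\deg g(x)\le n-1<n$. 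Thus $r(x)$ is its own canonical representative modulo $x^n-1$, and the membership of its class in $\mathcal{C}$ genuinely yields $r(x)\in\mathcal{C}$ as a polynomial of degree strictly less than $\deg g(x)$.

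To conclude, I would apply the minimality hypothesis exactly as in Theorem \ref{l31}: if $r(x)\neq 0$, then $r(x)$ is a nonzero element of $\mathcal{C}$ with $\deg r(x)<\deg g(x)$, contradicting the choice of $g(x)$ as a monic polynomial of smallest degree in $\mathcal{C}$. Hence $r(x)=0$, giving $x^n-1=q(x)g(x)$, which is precisely the statement that $g(x)$ is a right divisor of $x^n-1$.

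The routine parts here are the invocation of the division algorithm and the degree bookkeeping. The one point that deserves genuine care—the main obstacle—is the passage to the quotient: one must verify that the remainder $r(x)$, which lives a priori in $R$, legitimately descends to an element of $\mathcal{C}\subseteq R/\langle x^n-1\rangle$ via the relation $r(x)\equiv -q(x)g(x)$, and that its degree being below $\deg g(x)<n$ guarantees no further reduction alters it. This is where the closure of $\mathcal{C}$ under left multiplication by $R$ (its being a left $R$-submodule) is used in an essential way, and it is the hinge that lets the minimality argument apply to $r(x)$ rather than merely to its residue class.
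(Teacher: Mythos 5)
Your proposal is correct and follows essentially the same route as the paper: divide $x^n-1$ on the right by $g(x)$, observe that since $x^n-1\equiv 0$ in the quotient the remainder $r(x)=-q(x)g(x)$ lies in $\mathcal{C}$ (using that $\mathcal{C}$ is a left $R$-submodule), and invoke minimality of $\deg g(x)$ to force $r(x)=0$. Your extra care about the remainder being its own canonical representative modulo $x^n-1$ is a sound refinement that the paper leaves implicit, but it does not change the argument.
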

\begin{proof}
	Consider a monic smallest degree polynomial $g(x)$ in $\mathcal{C}$. From Theorem \ref{th algo}, there exist polynomials $q(x)$ and $r(x)$ in $R$ such that
	$x^n-1 = q(x)g(x) + r(x)$, where $\deg r(x)<\deg g(x).$ Since $g(x)$ and $x^n-1=0$ are in $\mathcal{C}$, this implies $r(x)=(x^n-1)-q(x)g(x)\in \mathcal{C}$. But, $g(x)$ is smallest in $\mathcal{C}$. Therefore, $r(x)=0$ and hence $g(x)$ is a right divisor of $x^n-1$.
\end{proof}
Let $\mathcal{C}$ be a $(\theta,\Im)$-cyclic code of length $n$ over $\mathbb{F}_q$ generated by the right divisor $g(x)$ of $x^n-1$, where $g(x)=g_0+g_1x+\cdots+g_rx^r\in R$ and $g_r=1$. Then from the above discussion, we can conclude that  $\mathcal{C}$ is a free left $\mathbb{F}_q$-module of dimension $k=n-\deg g(x)$. Now, by using \cite[Theorem~3.2]{Leroy}, the generator matrix of $\mathcal{C}$ is given by
\begin{equation}\label{eq3}
	G=
	\begin{pmatrix}
		g\\
		T_{\theta,\Im}(g)\\
		\vdots\\
		T_{\theta,\Im}^{k-1}(g)
	\end{pmatrix}
\end{equation}
where $g=(g_0,g_1,g_2,\dots,g_{r})$ is the codeword corresponding to $g(x)$. Moreover, it is well known that $\dim(\mathcal{C})+\dim(\mathcal{C}^\perp)=n$. Therefore, $\dim(\mathcal{C}^\perp)=n-k=r$. Further, for our convenience, we define a one-to-one correspondence between the algebraic structures and combinatorial structures of $(\theta,\Im)$-cyclic codes as follows:
$$\tau:~~~~\mathbb{F}_q^n~~~~~\longrightarrow~~~~ \mathbb{F}_q[x;\theta,\Im]/\langle x^n-1\rangle$$
$$(c_0,c_1,c_2,\dots,c_{n-1})\longmapsto c_0+c_1x+c_2x^2+\dots+c_{n-1}x^{n-1}.$$
\begin{theorem}
	Let $\mathcal{C}=\langle g(x) \rangle$ be a $(\theta,\Im)$-cyclic code of length $n$ over $\mathbb{F}_q$, for some right divisor $g(x)$ of $x^n-1$. Let $x^n-1=h(x)g(x)=g(x)h'(x)$ for some monic skew polynomials $g(x),h(x),h'(x)\in R$. Then $c(x)\in \mathbb{F}_q[x;\theta,\Im]/\langle x^n-1\rangle$ is contained in $\mathcal{C}$ if and only if $c(x)h'(x)=0$ in $\mathbb{F}_q[x;\theta,\Im]/\langle x^n-1\rangle$.
\end{theorem}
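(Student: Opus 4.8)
The plan is to prove both implications of the biconditional using the factorization $x^n-1=h(x)g(x)=g(x)h'(x)$ together with the right division algorithm (Theorem \ref{th algo}). The key observation I would rely on is that $\mathcal{C}=\langle g(x)\rangle=Rg(x)/\langle x^n-1\rangle$ consists exactly of the left multiples of $g(x)$ modulo $x^n-1$, as established in the preceding characterizations.

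For the forward direction, I would assume $c(x)\in\mathcal{C}$, so that $c(x)=a(x)g(x)$ in $R/\langle x^n-1\rangle$ for some $a(x)\in R$. Then I would compute
\begin{align*}
c(x)h'(x)&=a(x)g(x)h'(x)=a(x)(x^n-1),
\end{align*}
which is $0$ in $\mathbb{F}_q[x;\theta,\Im]/\langle x^n-1\rangle$ since $x^n-1\equiv 0$. This direction is essentially immediate once the two factorizations are available; the only point requiring care is that the product $g(x)h'(x)$ genuinely equals $x^n-1$ as skew polynomials (not merely modulo the ideal), which is exactly the hypothesis.

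For the converse, I would assume $c(x)h'(x)=0$ in the quotient and aim to show $c(x)\in\mathcal{C}$. First I would apply Theorem \ref{th algo} to divide $c(x)$ on the right by $g(x)$, obtaining $c(x)=q(x)g(x)+r(x)$ with $r(x)=0$ or $\deg r(x)<\deg g(x)$; since $g(x)$ is monic its leading coefficient is a unit, so the algorithm applies. The goal is to force $r(x)=0$. Multiplying on the right by $h'(x)$ and reducing modulo $x^n-1$, I would use $g(x)h'(x)=x^n-1\equiv 0$ to get $r(x)h'(x)\equiv c(x)h'(x)\equiv 0$, i.e. $r(x)h'(x)$ is a left multiple of $x^n-1$ in $R$.

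The main obstacle will be the degree bookkeeping in this last step: I must argue that $r(x)h'(x)$ being a genuine multiple of $x^n-1$ in $R$, combined with $\deg r(x)<\deg g(x)=r$ and $\deg h'(x)=n-r$, forces $r(x)=0$. The point is that $\deg\bigl(r(x)h'(x)\bigr)<r+(n-r)=n$, so the only way $r(x)h'(x)$ can be a left multiple of $x^n-1$ (whose degree is $n$) is for it to be the zero polynomial; since $h'(x)$ is monic it is not a zero divisor on the left, forcing $r(x)=0$. I would take care to justify that the reduction from the quotient back to $R$ is valid here — because the unique representative of degree less than $n$ of $c(x)h'(x)\bmod\langle x^n-1\rangle$ is zero precisely when $r(x)h'(x)$ lies in $\langle x^n-1\rangle$. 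Concluding $r(x)=0$ gives $c(x)=q(x)g(x)\in\mathcal{C}$, completing the proof.
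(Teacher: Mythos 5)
Your proposal is correct, and its forward direction is the same as the paper's: from $c(x)=a(x)g(x)$ one gets $c(x)h'(x)=a(x)g(x)h'(x)=a(x)(x^n-1)\equiv 0$. The converse is where you take a genuinely different route. The paper argues in one step: from $c(x)h'(x)=0$ in the quotient it writes $c(x)h'(x)=q(x)(x^n-1)$ in $R$, uses both factorizations of $x^n-1$ to rewrite the right-hand side as $q(x)h(x)g(x)=q(x)g(x)h'(x)$, and then right-cancels the nonzero polynomial $h'(x)$ to conclude $c(x)=q(x)g(x)\in\mathcal{C}$. You instead first divide $c(x)$ by $g(x)$ using the right division algorithm (Theorem \ref{th algo}), reduce the membership question to the remainder $r(x)$, and eliminate it by a degree count: $r(x)h'(x)$ is a left multiple of $x^n-1$ of degree strictly less than $n$, hence the zero polynomial, hence $r(x)=0$ by cancellation. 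Both arguments ultimately rest on the same unstated fact, namely that $\mathbb{F}_q[x;\theta,\Im]$ is a domain (degrees are additive and nonzero elements cancel on either side, because $\theta$ is an automorphism); the paper invokes it silently in the phrase ``as $h'(x)$ is a nonzero polynomial,'' whereas your degree bookkeeping makes the use of additivity visible. What the paper's route buys is brevity and an immediate exhibition of the cofactor $q(x)$; what yours buys is that it parallels the classical commutative cyclic-code argument, reuses machinery already established in the paper, and handles explicitly the passage between representatives in $R$ and elements of the quotient, which the paper leaves implicit.
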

\begin{proof}
	Let $c(x)\in \mathbb{F}_q[x;\theta,\Im]/\langle x^n-1\rangle$ be contained in $\mathcal{C}$. Then $c(x)=a(x)g(x)$ for some $a(x)\in R$. Now,
	\begin{align*}
		c(x)&=a(x)g(x) ~\text{for some} ~a(x)\in R \\
		c(x)h'(x)&=a(x)g(x)h'(x)=a(x)h(x)g(x)\\
		&=a(x)(x^n-1)=0 ~\text{in}~ \mathbb{F}_q[x;\theta,\Im]/\langle x^n-1\rangle.
	\end{align*}
	Conversely, let $c(x)h'(x)=0$ for some $c(x)$ in $\mathbb{F}_q[x;\theta,\Im]/\langle x^n-1\rangle$. Then $c(x)h'(x)=q(x)(x^n-1)$ for some $q(x)\in \mathbb{F}_q[x;\theta,\Im]/\langle x^n-1\rangle$. Also, $$c(x)h'(x)=q(x)(x^n-1)=q(x)h(x)g(x)=q(x)g(x)h'(x).$$ This implies that $c(x)=q(x)g(x)\in \langle g(x) \rangle=\mathcal{C}$ as $h'(x)$ is a nonzero polynomial.
\end{proof}
Now, with the help of the above-defined correspondence, the following theorem provides the generator matrix of the dual code $\mathcal{C}^\perp$ of $(\theta,\Im)$-cyclic code $\mathcal{C}$ of length $n$ over $\mathbb{F}_q$.
\begin{theorem}\label{thm34}
	Let $\mathcal{C}=\langle g(x) \rangle$ be a $(\theta,\Im)$-cyclic code of length $n$ over $\mathbb{F}_q$ for some right divisor $g(x)$ of $x^n-1$ and $x^n-1=h(x)g(x)=g(x)h'(x)$ for some monic skew polynomials $g(x),h(x),h'(x)\in R$. Then $\deg g(x)$ linearly independent columns of the matrix
	$$H=
	\begin{pmatrix}
		h'\\
		T_{\theta,\Im}(h')\\
		\vdots\\
		T_{\theta,\Im}^{n-1}(h')
	\end{pmatrix}$$
	form a basis of $\mathcal{C}^\perp$.
\end{theorem}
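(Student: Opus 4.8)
The plan is to realize the dual code as an explicit null space and then match dimensions. The starting point is the characterization proved just above: a vector $c=(c_0,\dots,c_{n-1})$ with associated polynomial $c(x)=\sum_{i=0}^{n-1}c_ix^i$ lies in $\mathcal{C}$ if and only if $c(x)h'(x)=0$ in $\mathbb{F}_q[x;\theta,\Im]/\langle x^n-1\rangle$. The key identity I would first establish is that the coefficient vector of the product $c(x)h'(x)$ (reduced modulo $x^n-1$) is precisely the row--matrix product $cH$. Indeed, writing $c(x)h'(x)=\sum_{i=0}^{n-1}c_i\bigl(x^ih'(x)\bigr)$ and using Theorem \ref{thm1}, which identifies left multiplication by $x$ on $R/\langle x^n-1\rangle$ with the pseudo-linear shift $T_{\theta,\Im}$ on $\mathbb{F}_q^n$, the polynomial $x^ih'(x)$ corresponds to the vector $T_{\theta,\Im}^i(h')$, that is, the $i$-th row of $H$. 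Since each $c_i\in\mathbb{F}_q$ multiplies $x^ih'(x)$ from the left as a constant polynomial, it merely scales the coefficients, so the vector attached to $c(x)h'(x)$ is $\sum_{i=0}^{n-1}c_iT_{\theta,\Im}^i(h')=cH$.

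Combining this identity with the characterization above gives $c\in\mathcal{C}\iff cH=0$; that is, $\mathcal{C}$ equals the left null space of $H$. The map $c\mapsto cH$ is $\mathbb{F}_q$-linear, being right multiplication by $h'(x)$ on the left $\mathbb{F}_q$-vector space $\mathbb{F}_q[x;\theta,\Im]/\langle x^n-1\rangle$, so the rank--nullity theorem yields $\operatorname{rank}(H)=n-\dim\mathcal{C}=n-(n-\deg g)=\deg g$.

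Next I would read off the orthogonality. The equation $cH=0$ says exactly that $\sum_i c_iH_{ij}=0$ for every column index $j$, i.e. $c$ is Euclidean-orthogonal to each column of $H$. As this holds for every $c\in\mathcal{C}$, each column of $H$ lies in $\mathcal{C}^\perp$, whence the column space of $H$ is contained in $\mathcal{C}^\perp$. I would then close the argument by counting dimensions: the column space of $H$ has dimension $\operatorname{rank}(H)=\deg g$, while $\dim\mathcal{C}^\perp=n-\dim\mathcal{C}=\deg g$ as already recorded before the statement. Hence the column space of $H$ equals $\mathcal{C}^\perp$, and any $\deg g$ linearly independent columns of $H$ form a basis of $\mathcal{C}^\perp$.

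The only delicate point is the first identity $c(x)h'(x)\leftrightarrow cH$: one must be careful that $T_{\theta,\Im}$ is only additive and semi-linear, so scalars cannot in general be pulled through $T_{\theta,\Im}$ itself. Here, however, the coefficients $c_i$ sit to the \emph{left} of $x^ih'(x)$ as constant polynomials and therefore act by ordinary coefficientwise scaling, which is exactly what makes the sum collapse to the matrix product $cH$. Once this is handled correctly, the remainder is a routine null-space and dimension-count argument.
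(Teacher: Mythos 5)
Your proof is correct and follows essentially the same route as the paper: both rest on the identity that the coefficient vector of $c(x)h'(x)$ reduced modulo $x^n-1$ is exactly $cH$ (via the correspondence between left multiplication by $x$ and $T_{\theta,\Im}$), giving $cH=0$ on $\mathcal{C}$, and then a dimension count against $\dim \mathcal{C}^\perp=\deg g(x)$. The only minor difference is how $\operatorname{rank}(H)=\deg g(x)$ is obtained: you invoke the preceding if-and-only-if characterization together with rank--nullity, while the paper instead argues directly that the rows $h', T_{\theta,\Im}(h'),\dots,T_{\theta,\Im}^{\deg g(x)-1}(h')$ are linearly independent (their polynomial images $x^k h'(x)$ have distinct degrees below $n$); your variant is, if anything, slightly more complete, since it makes the final dimension count explicit where the paper leaves it implicit.
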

\begin{proof}
	Consider a $(\theta,\Im)$-cyclic code $\mathcal{C}$ of length $n$ over $\mathbb{F}_q$. Let $\mathcal{C}=\langle g(x) \rangle$ where $g(x)$ is a right divisor of $x^n-1$, and its leading coefficient is a unit. Then there exists $h(x)=h_0+h_1x+\cdots+h_kx^k\in \mathbb{F}_q[x;\theta,\Im]/\langle x^n-1\rangle$ such that $x^n-1=h(x)g(x)=g(x)h'(x)$. Now, for $c(x)=c_0+c_1x+\cdots+c_{n-1}x^{n-1}\in \mathcal{C}$, we have
	$$\tau(c(T_{\theta,\Im})(h'))=c(x)h'(x)=a(x)g(x)h'(x)=a(x)h(x)g(x)=a(x)(x^n-1)=0$$ for some $a(x)$ in $\mathbb{F}_q[x;\theta,\Im]/\langle x^n-1\rangle$ and $c(x)h'(x)$ is taken modulo $x^n-1$. This implies $c(T_{\theta,\Im})(h')=0$. Thus, $0=c(T_{\theta,\Im})(h')=c_0+ c_1T_{\theta,\Im}(h')+c_2T_{\theta,\Im}^2(h')+\cdots+c_{n-1}T_{\theta,\Im}^{n-1}(h').$ This shows that $(c_0,c_1,c_2,\dots,c_{n-1}).H=0$ for any $c=(c_0,c_1,c_2,\dots,c_{n-1})\in \mathcal{C}$. Also, $\tau(T_{\theta,\Im}^k(h'))=x^kh'(x)$ for $k=0,1,\dots, n-deg~ h'(x)-1=deg ~g(x)-1$ and hence $\{h',T_{\theta,\Im}(h'),T_{\theta,\Im}^2(h'),\dots,\\T_{\theta,\Im}^{r-1}(h')\}$ are linearly independent.
\end{proof}	
We now derive a necessary and sufficient condition for $(\theta,\Im)$-cyclic codes to contain their duals codes over $\mathbb{F}_q$.

\begin{theorem}\label{th2}
	Let $\mathcal{C}=\langle g(x) \rangle$ be a $(\theta,\Im)$-cyclic code of length $n$ over $\mathbb{F}_q$, for some right divisor $g(x)$ of $x^n-1$ and $x^n-1=h(x)g(x)=g(x)h'(x)$ for some monic skew polynomials $g(x),h(x),h'(x)\in R$.
	Then $\mathcal{C}^{\perp}\subseteq \mathcal{C}$ if and only if $h'(x)h'(x)$ is divisible by $x^n-1$ from the right.
\end{theorem}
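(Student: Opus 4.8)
The plan is to reduce the dual-containment $\mathcal{C}^{\perp}\subseteq \mathcal{C}$ to a single right-divisibility condition by first pinning down $\mathcal{C}^{\perp}$ as an explicitly generated $(\theta,\Im)$-cyclic code. Since $x^n-1=g(x)h'(x)$ with $g(x)$ monic of degree $r$, comparing degrees shows $h'(x)$ is a monic right divisor of $x^n-1$ of degree $n-r$. Hence, by Theorem \ref{thm1}, the left submodule $\langle h'(x)\rangle = Rh'(x)/\langle x^n-1\rangle$ is a free $\mathbb{F}_q$-module of dimension $n-(n-r)=r$ whose basis corresponds, under $\tau$, to $\{h'(x),xh'(x),\dots,x^{r-1}h'(x)\}$, that is, to the vectors $\{h',T_{\theta,\Im}(h'),\dots,T_{\theta,\Im}^{r-1}(h')\}$. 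By Theorem \ref{thm34} these same $r=\deg g(x)$ vectors form a basis of $\mathcal{C}^{\perp}$, and since $\dim \mathcal{C}^{\perp}=n-k=r$ as well, I would conclude that $\mathcal{C}^{\perp}=\langle h'(x)\rangle$ as submodules of $R/\langle x^n-1\rangle$.

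With this identification in hand, the containment becomes purely module-theoretic. Because $\mathcal{C}=\langle g(x)\rangle$ is a left $R$-submodule, the inclusion $\langle h'(x)\rangle \subseteq \langle g(x)\rangle$ holds if and only if its single generator $h'(x)$ already lies in $\mathcal{C}$: if $h'(x)\in \mathcal{C}$ then $Rh'(x)\subseteq \mathcal{C}$ by closure under left multiplication, and the converse is immediate. Thus $\mathcal{C}^{\perp}\subseteq \mathcal{C}$ is equivalent to $h'(x)\in \mathcal{C}$.

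Finally I would invoke the membership characterization established just before Theorem \ref{thm34}, namely that $c(x)\in \mathcal{C}$ if and only if $c(x)h'(x)=0$ in $R/\langle x^n-1\rangle$. Applying this with $c(x)=h'(x)$ gives $h'(x)\in\mathcal{C}$ if and only if $h'(x)h'(x)=0$ in $R/\langle x^n-1\rangle$, which is exactly the assertion that $h'(x)h'(x)$ is divisible by $x^n-1$ from the right. Chaining the three equivalences $\mathcal{C}^{\perp}\subseteq\mathcal{C}\iff h'(x)\in\mathcal{C}\iff h'(x)h'(x)\equiv 0$ then yields the claim.

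The step deserving the most care is the first: verifying that the Euclidean dual $\mathcal{C}^{\perp}$ coincides \emph{exactly} with the left submodule $\langle h'(x)\rangle$, rather than merely containing it or being spanned by a superficially related set. This hinges on matching the explicit basis furnished by Theorem \ref{thm34} against the dimension count $\dim\mathcal{C}^{\perp}=r$ and on confirming $\deg h'(x)=n-r$ so that the iterate counts agree on both sides. Because $R$ is noncommutative, I would be careful throughout to keep left module structure and right divisibility straight, since the final condition involves division from the right while $\mathcal{C}$ and $\mathcal{C}^{\perp}$ are left submodules; once these bookkeeping points are settled, the remaining equivalences are formal.
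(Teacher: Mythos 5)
Your proposal is correct and follows essentially the same route as the paper's own proof: both arguments hinge on the identification $\mathcal{C}^{\perp}=\langle h'(x)\rangle$ (which the paper simply asserts, citing the earlier parity-check result, and which you justify via Theorem \ref{thm34} together with the dimension count) and then reduce dual-containment to the computation $h'(x)=p(x)g(x)\Rightarrow h'(x)h'(x)=p(x)g(x)h'(x)=p(x)(x^n-1)$ and its reverse. The only difference is organizational: where the paper performs the cancellation $\bigl(a(x)-c(x)b(x)g(x)\bigr)h'(x)=0$ inline for an arbitrary element $a(x)=c(x)h'(x)$ of $\mathcal{C}^{\perp}$, you first reduce to the single generator $h'(x)$ by left-module closure and then invoke the membership criterion $c(x)\in\mathcal{C}\iff c(x)h'(x)=0$ with $c(x)=h'(x)$, a criterion whose proof in the paper is exactly the same computation.
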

\begin{proof}
	Let $\mathcal{C}=\langle g(x) \rangle$ be a $(\theta,\Im)$-cyclic code over $\mathbb{F}_q$ such that $\mathcal{C}^{\perp}\subseteq \mathcal{C}$. Note that $h'(x)\in \mathcal{C}^\perp$ and $\mathcal{C}^\perp\subseteq \mathcal{C}=\langle g(x) \rangle$. Thus, $h'(x)=p(x)g(x)$ for some $p(x)\in R$. Now, multiplying both sides by $h'(x)$ from right, we get
	$$h'(x)h'(x)=p(x)g(x)h'(x)=p(x)(x^n-1).$$ Hence, $h'(x)h'(x)$ is divisible by $x^n-1$ from the right.

	Conversely,	let $h'(x)h'(x)$ be divisible by $x^n-1$ from the right. Then $h'(x)h'(x)=b(x)(x^n-1)$ for some $b(x)\in R$. Now, consider $a(x)\in\mathcal{C}^\perp=\langle h'(x)\rangle$, then $a(x)=c(x)h'(x)$ for some $c(x)\in R$. Multiplying both sides by $h'(x)$ from right and using $h'(x)h'(x)=b(x)(x^n-1)$, we get
	\begin{align*}
		a(x)h'(x)=c(x)h'(x)h'(x)&=c(x)b(x)(x^n-1)\\
		&=c(x)b(x)h(x)g(x)
		=c(x)b(x)g(x)h'(x),\\
		&\Big(a(x)-c(x)b(x)g(x)\Big)h'(x)=0.
	\end{align*}
	As $h'(x)$ is a nonzero polynomial, we have $a(x)-c(x)b(x)g(x)=0$, which gives $a(x)=c(x)b(x)g(x)$. Therefore, $a(x)\in \mathcal{C}=\langle g(x) \rangle$. Thus, $\mathcal{C}^\perp\subseteq \mathcal{C}$.
\end{proof}
Here, we present an example to show the construction of $(\theta,\Im)$-cyclic codes over $\mathbb{F}_q$ with the help of our derived results.
\begin{example}
	Let $q=49,n=14$. In $\mathbb{F}_{49}$, the Frobenius automorphism $\theta:\mathbb{F}_{49}\longrightarrow \mathbb{F}_{49}$ is defined by $\theta(a)=a^7$ whereas the $\theta$-derivation $\Im$ is defined by $\Im(a)=w^2(\theta(a)-a)$ for all $a\in \mathbb{F}_{49}$. Therefore, $R=\mathbb{F}_{49}[x;\theta,\Im]$ is a skew polynomial ring. In $\mathbb{F}_{49}[x;\theta,\Im],$ we have
	\begin{align*}
		x^{14}-1=&(w^9x^{12} + 3x^{11} + w^{41}x^{10} + w^{13}x^9 + w^{37}x^8 + w^{47}x^7 + w^{18}x^5 +6x^4 + w^{38}x^3 \\&+ w^{18}x^2 + w^{28}x + w^{12})(w^{39}x^2 + w^3x + w^{17})=h(x)g(x)\\
		=&(w^{39}x^2 + w^3x + w^{17})(w^9x^{12} + 3x^{11} + w^{41}x^{10} + w^{13}x^9 + w^{37}x^8 + w^{47}x^7 \\&+ w^{33}x^5 +4x^4 + w^{17}x^3 + w^{37}x^2 + w^{13}x + w^{23})=g(x)h'(x).
	\end{align*}
	Consider $g(x)=w^{39}x^2 + w^3x + w^{17}$, $h(x)=w^9x^{12} + 3x^{11} + w^{41}x^{10} + w^{13}x^9 + w^{37}x^8 + w^{47}x^7 + w^{18}x^5 +6x^4 + w^{38}x^3 + w^{18}x^2 + w^{28}x+ w^{12}$ and $h'(x)=w^9x^{12} + 3x^{11} + w^{41}x^{10} + w^{13}x^9 + w^{37}x^8 + w^{47}x^7 + w^{33}x^5+4x^4 + w^{17}x^3 + w^{37}x^2 + w^{13}x + w^{23}$.
	Then, by Theorem \ref{thm34} and Equation \ref{eq3}, $\mathcal{C}$ is a $(\theta,\Im)$-cyclic codes over $\mathbb{F}_{49}$  of length $14$ which is generated by $g(x)$. The generator and parity check matrices of  $\mathcal{C}$ are given by Equation \ref{eq3} and  Theorem \ref{thm34}
 respectively.
 Since, $h'(x)h'(x)$ is divisible by $x^{14}-1$ from the right and hence the code $\mathcal{C}$ is also a dual-containing code, i.e., $\mathcal{C}^{\perp}\subseteq \mathcal{C}$.
 \end{example}

\section{ $(\gamma,\Delta)$-cyclic codes over $\mathscr{R}_{q,s}$}
In this section, our main focus is to discuss the algebraic properties of $(\gamma,\Delta)$-cyclic codes over $\mathscr{R}_{q,s}$ via decomposition over $\mathbb{F}_q$. To do so, we consider a linear code $\mathcal{C}$ of length $n$ over $\mathscr{R}_{q,s}$. Towards this, we define
$$\mathcal{C}_i=\left\{t_i\in \mathbb{F}_q^n~|~ \sum_{i=0}^{s}\zeta_it_i\in \mathcal{C}, ~\text{for some}~ t_0,t_1,\dots,t_{i-1},t_{i+1},\dots,t_s\in \mathbb{F}_q^n \right\}$$ for $0\leq i\leq s$. Then $\mathcal{C}_i$ is a linear code of length $n$ over $\mathbb{F}_q$ and $\mathcal{C}$ can be decomposed as
$$\mathcal{C}=\displaystyle\bigoplus_{i=0}^{s}\zeta_i\mathcal{C}_i.$$
Further, we consider a map $\gamma:\mathscr{R}_{q,s}\rightarrow \mathscr{R}_{q,s}$ defined by
$$ \gamma(r)=\sum_{i=0}^{s}\zeta_i\theta(r_i)$$ where $r=\sum_{i=0}^{s}\zeta_ir_i$ and $\theta\in \text{Aut}(\mathbb{F}_q)$ defined by $\theta(r_i)=r_i^{p^t}$ for all $r_i\in \mathbb{F}_q$.
Then $\gamma$ is an automorphism on $\mathscr{R}_{q,s}$. Next, we define a map $\Delta: \mathscr{R}_{q,s}\rightarrow \mathscr{R}_{q,s}$ such that
$$\Delta(r)=(1+v_1+v_2+\cdots+v_{r})(\gamma(r)-r)$$ where $r=\sum_{i=0}^{r}r_iv_i$ and $r_i\in \mathbb{F}_q$.
\begin{theorem}
	The above defined map $\Delta$ is a $\gamma$-derivation of $\mathscr{R}_{q,s}$.
\end{theorem}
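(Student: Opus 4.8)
The plan is to recognize that the given map is an inner $\gamma$-derivation and to verify the two defining axioms of Definition~1.1 directly. Writing $\beta = 1 + v_1 + v_2 + \cdots + v_s \in \mathscr{R}_{q,s}$ for the fixed ring element appearing in the formula, the map takes the compact form $\Delta(r) = \beta\bigl(\gamma(r) - r\bigr)$ for every $r \in \mathscr{R}_{q,s}$. Once $\Delta$ is written in this inner form, the verification reduces to two algebraic checks, both of which exploit that $\mathscr{R}_{q,s}$ is a commutative ring.

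First I would check additivity. For $x, y \in \mathscr{R}_{q,s}$, since $\gamma$ is additive (being an automorphism) and multiplication by the fixed element $\beta$ distributes over addition, we obtain $\Delta(x+y) = \beta\bigl(\gamma(x+y) - (x+y)\bigr) = \beta\bigl(\gamma(x) - x\bigr) + \beta\bigl(\gamma(y) - y\bigr) = \Delta(x) + \Delta(y)$, which establishes condition (1).

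Next, and this is the only step requiring care, I would verify the twisted Leibniz rule $\Delta(xy) = \Delta(x)y + \gamma(x)\Delta(y)$. Expanding the left-hand side via $\gamma(xy) = \gamma(x)\gamma(y)$ gives $\Delta(xy) = \beta\bigl(\gamma(x)\gamma(y) - xy\bigr)$. For the right-hand side, I would expand $\Delta(x)y + \gamma(x)\Delta(y) = \beta\bigl(\gamma(x) - x\bigr)y + \gamma(x)\beta\bigl(\gamma(y) - y\bigr)$ into four terms. The crucial move is to commute $\gamma(x)$ past $\beta$ in the second summand using commutativity of $\mathscr{R}_{q,s}$, after which the term $\beta\gamma(x)y$ coming from the first summand cancels against the term $-\beta\gamma(x)y$ coming from the second, leaving exactly $\beta\gamma(x)\gamma(y) - \beta xy$, which matches the left-hand side.

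The main (and essentially only) obstacle is the bookkeeping in the Leibniz expansion together with the explicit identification of where commutativity is invoked; there is no genuine difficulty, since $\Delta$ is an inner derivation over a commutative ring. One could alternatively bypass the direct computation by decomposing $\gamma$ and $\Delta$ componentwise along the orthogonal idempotents $\zeta_0, \zeta_1, \ldots, \zeta_s$ and reducing the statement on each component to the known fact that $a \mapsto \beta_i\bigl(\theta(a) - a\bigr)$ is a $\theta$-derivation of $\mathbb{F}_q$, but the direct verification above is the cleaner route and I would present that.
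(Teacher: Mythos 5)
Your proof is correct and matches the paper's own argument: both verify additivity directly and establish the twisted Leibniz rule by the same cancellation (the paper adds and subtracts the term $(1+v_1+\cdots+v_s)\gamma(r)t$, which is exactly the four-term expansion and cancellation you describe, with commutativity of $\mathscr{R}_{q,s}$ used in the same place). Your compact notation $\beta = 1+v_1+\cdots+v_s$ is merely a cosmetic improvement over the paper's presentation.
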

\begin{proof}
	Let  $r,t\in\mathscr{R}_{q,s}$, we have
	\begin{align*}
		\Delta(r+t)&=(1+v_1+v_2+\cdots+v_{s})(\gamma(r+t)-(r+t)) \\
		&=(1+v_1+v_2+\cdots+v_{s})(\gamma(r)-r)+(1+v_1+v_2+\cdots+v_{s})(\gamma(t)-t)\\
		&=\Delta(r)+\Delta(t)
	\end{align*}
	and
	\begin{align*}
		\Delta(rt)=&(1+v_1+v_2+\cdots+v_{s})(\gamma(rs)-rt) \\
		=&(1+v_1+v_2+\cdots+v_{s})(\gamma(r)\gamma(t))-(1+v_1+v_2+\cdots+v_{s})rt\\
		 =&(1+v_1+v_2+\cdots+v_{s})(\gamma(r)\gamma(t))-(1+v_1+v_2+\cdots+v_{s})rt\\&+(1+v_1+v_2+\cdots+v_{s})\gamma(r)t-(1+v_1+v_2+\cdots+v_{s})\gamma(r)t\\
		=&(1+v_1+v_2+\cdots+v_{s})\gamma(r)(\gamma(t)-t)-(1+v_1+v_2+\cdots+v_{s})(r-\gamma(r))t\\
		=&(1+v_1+v_2+\cdots+v_{s})\gamma(r)(\gamma(t)-t)+(1+v_1+v_2+\cdots+v_{s})(\gamma(r)-r)t\\
		=&\Delta(r)t+\gamma(r)\Delta(t).
	\end{align*}
	Hence, $\Delta$ is a $\gamma$-derivation of $\mathscr{R}_{q,s}$.
\end{proof}
Further, with the help of the defined decomposition of $\mathcal{C}$, we discuss the algebraic properties of $(\gamma,\Delta)$-cyclic codes over $\mathscr{R}_{q,s}$.
\begin{theorem}\label{th31}
	Let $\mathcal{C}=\displaystyle\bigoplus_{i=0}^{s}\zeta_i\mathcal{C}_i$ be a linear code of length $n$ over $\mathscr{R}_{q,s}$ where $\mathcal{C}_i$ is a linear code of length $n$ over $\mathbb{F}_q$ for $i=0,1,2,\dots,s$. Then $\mathcal{C}$ is a $(\gamma,\Delta)$-cyclic code  of length $n$ over $\mathscr{R}_{q,s}$ if and only if $\mathcal{C}_i$ is a  $(\theta,\Im)$-cyclic code of length $n$ over $\mathbb{F}_q$ for $i=0,1,2,\dots,s$.
\end{theorem}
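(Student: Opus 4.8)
The plan is to reduce everything to the operational description of a $(\gamma,\Delta)$-cyclic code, namely that $\mathcal{C}$ is $\mathscr{R}_{q,s}$-linear and closed under the pseudo-linear shift $T_{\gamma,\Delta}$ of Equation~(\ref{eq1}). Since $M$ is the fixed cyclic-shift matrix and left multiplication by $x$ in $\mathscr{R}_{q,s}[x;\gamma,\Delta]/\langle x^n-1\rangle$ realises $T_{\gamma,\Delta}$ (by the same reasoning as in Theorem~\ref{thm1}(2)), being a left $\mathscr{R}_{q,s}[x;\gamma,\Delta]$-submodule is equivalent to being $\mathscr{R}_{q,s}$-linear together with $T_{\gamma,\Delta}(\mathcal{C})\subseteq\mathcal{C}$. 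The linearity half is handled by the Chinese Remainder decomposition $\mathcal{C}=\bigoplus_{i=0}^{s}\zeta_i\mathcal{C}_i$, under which $\mathscr{R}_{q,s}$-linearity of $\mathcal{C}$ is equivalent to $\mathbb{F}_q$-linearity of every $\mathcal{C}_i$, which is already part of the hypothesis. Thus the entire content of the theorem lies in transporting shift-closure across the direct sum.

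The key step I would isolate is that $T_{\gamma,\Delta}$ acts diagonally with respect to the idempotents $\zeta_0,\dots,\zeta_s$. From $\gamma(r)=\sum_{i=0}^{s}\zeta_i\theta(r_i)$ it is immediate that $\gamma$ respects the decomposition. For $\Delta$ I would rewrite the defining factor using $v_j=\zeta_j$ and $1=\sum_{i=0}^{s}\zeta_i$, obtaining $1+v_1+\cdots+v_s=\zeta_0+2\sum_{j=1}^{s}\zeta_j$; multiplying by $\gamma(r)-r=\sum_{i=0}^{s}\zeta_i(\theta(r_i)-r_i)$ and using $\zeta_i\zeta_j=\delta_{ij}\zeta_i$ yields $\Delta(r)=\sum_{i=0}^{s}\zeta_i\,\Im_i(r_i)$, where $\Im_i(a)=\beta_i(\theta(a)-a)$ with $\beta_0=1$ and $\beta_j=2$ for $1\le j\le s$. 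Each $\Im_i$ is precisely a $\theta$-derivation of $\mathbb{F}_q$ of the form considered in Section~3, so that the restriction of $(\gamma,\Delta)$ to the $i$-th component is $(\theta,\Im_i)$. Because the entries of $M$ lie in the prime subfield, $\zeta_i$ commutes past multiplication by $M$, and combining the two expressions gives, for any $v=\sum_{i=0}^{s}\zeta_i v^{(i)}$ with $v^{(i)}\in\mathbb{F}_q^n$,
\[
T_{\gamma,\Delta}(v)=\sum_{i=0}^{s}\zeta_i\,T_{\theta,\Im_i}\!\big(v^{(i)}\big).
\]

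With this componentwise formula the two implications are short. For the forward direction, assume $\mathcal{C}$ is $(\gamma,\Delta)$-cyclic and fix $v^{(i)}\in\mathcal{C}_i$; the element $\zeta_i v^{(i)}$, with zero in the remaining components, lies in $\mathcal{C}$ by the direct-sum structure, so $T_{\gamma,\Delta}(\zeta_i v^{(i)})=\zeta_i T_{\theta,\Im_i}(v^{(i)})\in\mathcal{C}$, whence $T_{\theta,\Im_i}(v^{(i)})\in\mathcal{C}_i$ by the definition of $\mathcal{C}_i$; together with $\mathbb{F}_q$-linearity this makes $\mathcal{C}_i$ a $(\theta,\Im_i)$-cyclic code. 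For the converse, assume each $\mathcal{C}_i$ is $(\theta,\Im_i)$-cyclic and take $c=\sum_{i=0}^{s}\zeta_i c^{(i)}\in\mathcal{C}$; the displayed formula gives $T_{\gamma,\Delta}(c)=\sum_{i=0}^{s}\zeta_i T_{\theta,\Im_i}(c^{(i)})$ with each $T_{\theta,\Im_i}(c^{(i)})\in\mathcal{C}_i$, so $T_{\gamma,\Delta}(c)\in\bigoplus_{i=0}^{s}\zeta_i\mathcal{C}_i=\mathcal{C}$, and $\mathcal{C}$ is $(\gamma,\Delta)$-cyclic.

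The step I expect to be the main obstacle, and really the only non-formal point, is the decomposition of $\Delta$: unlike $\gamma$, it is defined through the non-idempotent factor $1+v_1+\cdots+v_s$, which must be re-expressed in the $\zeta_i$ basis, and one has to verify that the resulting component maps are genuine $\theta$-derivations of the prescribed shape. A secondary subtlety worth flagging is that the scalars $\beta_i$ differ across components ($1$ on the $\zeta_0$-part and $2$ elsewhere), so strictly each $\mathcal{C}_i$ is cyclic for its own induced derivation $\Im_i$; this is exactly why the symbol $\Im$ in the statement should be read as the component derivation rather than a single fixed map.
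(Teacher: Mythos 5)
Your proof is correct and takes essentially the same route as the paper's: decompose $\gamma$ and $\Delta$ along the idempotents $\zeta_i$ so that $T_{\gamma,\Delta}$ acts componentwise as $T_{\theta,\Im_i}$, then transfer shift-closure in both directions using the definition of $\mathcal{C}_i$. The only differences are cosmetic: the paper obtains the componentwise action of $\Delta$ via the Leibniz rule together with $\Delta(\zeta_i)=0$ rather than your direct expansion $1+v_1+\cdots+v_s=\zeta_0+2\sum_{j=1}^{s}\zeta_j$, and it writes a single symbol $\Im$ where you rightly note that the induced constants differ across components ($\beta_0=1$, $\beta_j=2$) --- a point the paper's proof glosses over.
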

\begin{proof}
	Let $\mathcal{C}=\displaystyle\bigoplus_{i=0}^{s}\zeta_i\mathcal{C}_i$ be a $(\gamma,\Delta)$-cyclic code of length $n$ over $\mathscr{R}_{q,s}$ and $a^i=(a_0^i,a_1^i,\dots,a_{n-1}^i)\\\in \mathcal{C}_i$, for $0\leq i \leq s$. Consider $r_j=\sum_{i=0}^{s}\zeta_ia_j^i$  for $0\leq j\leq n-1$. Then $r=(r_0,r_1,\dots,r_{n-1})\in \mathcal{C}$ and $T_{\gamma,\Delta}(r)\in \mathcal{C}$. Again, we have $\gamma(r_j)=\sum_{i=0}^{s}\zeta_i\theta(a_j^i)$ and $\Delta(r_j)= \Delta(\sum_{i=0}^{s}\zeta_ia_j^i)=\Delta(\zeta_0a_j^0)+\Delta(\zeta_1a_j^1)+\cdots+\Delta(\zeta_0a_j^s)$ for  $0\leq j\leq n-1$. Also,
	\begin{align*}
		\Delta(\zeta_0a_j^0)&= \Delta(\zeta_0)a_j^0+\gamma(\zeta_0)\Im(a_j^0)\\
		&=\Big((1+v_1+\cdots+v_s)(\gamma(\zeta_0)-\zeta_0)\Big)a_j^0+\zeta_0\Im(a_j^0)\\
		&=\zeta_0\Im(a_j^0).
	\end{align*} Similarly, $\Delta(\zeta_ia_j^i)=\zeta_i\Im(a_j^i)$ for $i=1,2,\dots,s$ and $0\leq j\leq n-1$. Hence, $T_{\gamma,\Delta}(r)=\sum_{i=0}^{s}\zeta_iT_{\theta,\Im}(a^i)$. This implies that $T_{\theta,\Im}(a^i)\in \mathcal{C}_i$ for $i=0,1,2,\dots,s$. Thus, $\mathcal{C}_i$ is a  $(\theta,\Im)$-cyclic code of length $n$ over $\mathbb{F}_q$ for $i=0,1,2,\dots,s$.\\
	Conversely, suppose $\mathcal{C}_i$ is a $(\theta,\Im)$-cyclic code of length $n$ over $\mathbb{F}_q$. Let $r=(r_0,r_1,\dots,\\r_{n-1})\in \mathcal{C} $ where $r_j=\sum_{i=0}^{s}\zeta_ia_j^i$  for $0\leq j\leq n-1$.  Consider, $a^i=(a_0^i,a_1^i,\dots,a_{n-1}^i)$, for $0\leq i \leq s$. Then $a^i\in \mathcal{C}_i$ and also $T_{\theta,\Im}(a^i)\in \mathcal{C}_i$. Similar to the first part of the proof, we have
	$$\gamma(r_j)=\sum_{i=0}^{s}\zeta_i\theta(a_j^i)$$ and $$\Delta(r_j)= \Delta\Big(\sum_{i=0}^{s}\zeta_ia_j^i\Big)=\sum_{i=0}^{s}\zeta_i\Im(a_j^i)$$ for $i=0,1,2,\dots,s$ and $0\leq j\leq n-1$.
	Then \begin{align*}
		T_{\gamma,\Delta}(r)=\gamma(r)M + \Delta(r)=&\Big(\gamma(r_{n-1})+ \Delta(r_0),\gamma(r_{o})+ \Delta(r_1),\gamma(r_{1})+ \Delta(r_2),\dots,\gamma(r_{n-2})+\\& \Delta(r_{n-1})\Big)\\
		=&\sum_{i=0}^{s}\zeta_iT_{\theta,\Im}(a^i)
		\in \displaystyle\bigoplus_{i=0}^{s}\zeta_i\mathcal{C}_i=\mathcal{C}.
	\end{align*}
	Therefore, $\mathcal{C}$ is a $(\gamma,\Delta)$-cyclic code of length $n$ over $\mathscr{R}_{q,s}$.
\end{proof}

\begin{theorem}
	Let  $\mathcal{C}=\displaystyle\bigoplus_{i=0}^{s}\zeta_i\mathcal{C}_i$ be a $(\gamma,\Delta)$-cyclic code of length $n$ over $\mathscr{R}_{q,s}$. Then $\mathcal{C}=\langle \zeta_0g_0(x),\zeta_1g_1(x),\dots,\zeta_sg_s(x)\rangle$ and $|\mathcal{C}|= q^{(s+1)n-\sum_{i=0}^{s} \deg( g_i(x))}$, where $g_i(x)$ is a generator polynomial of $\mathcal{C}_i$ for $i=0,1,2,\dots,s$.
\end{theorem}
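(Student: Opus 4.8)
The plan is to use the direct sum decomposition $\mathcal{C}=\bigoplus_{i=0}^{s}\zeta_i\mathcal{C}_i$ together with the structural result from Theorem~\ref{th31}, which tells us that each $\mathcal{C}_i$ is a $(\theta,\Im)$-cyclic code of length $n$ over $\mathbb{F}_q$. By the results of Section~3 (in particular Theorem~\ref{l31} and its successor), each $\mathcal{C}_i$ is generated by a monic right divisor $g_i(x)$ of $x^n-1$ in $R=\mathbb{F}_q[x;\theta,\Im]$, and is a free left $\mathbb{F}_q$-module of dimension $n-\deg g_i(x)$. The two assertions to establish are the generating-set description $\mathcal{C}=\langle \zeta_0g_0(x),\dots,\zeta_sg_s(x)\rangle$ and the cardinality formula.

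\textbf{Containment in both directions for the generating set.} First I would show $\langle \zeta_0g_0(x),\dots,\zeta_sg_s(x)\rangle\subseteq\mathcal{C}$. Each $\zeta_ig_i(x)$ lies in $\zeta_i\mathcal{C}_i\subseteq\mathcal{C}$, and since $\mathcal{C}$ is a left submodule of $\mathscr{R}_{q,s}[x;\gamma,\Delta]/\langle x^n-1\rangle$, the full left ideal they generate is contained in $\mathcal{C}$. For the reverse inclusion, take an arbitrary $c(x)\in\mathcal{C}$ and write $c(x)=\sum_{i=0}^{s}\zeta_ic_i(x)$ with $c_i(x)\in\mathcal{C}_i$ under the decomposition. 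Since $c_i(x)\in\mathcal{C}_i=\langle g_i(x)\rangle$ over $R$, we have $c_i(x)=a_i(x)g_i(x)$ for some $a_i(x)\in R$. The key point is to lift each scalar multiplication over $\mathbb{F}_q$ to multiplication in $\mathscr{R}_{q,s}[x;\gamma,\Delta]$ by inserting the idempotents: using $\zeta_i^2=\zeta_i$ and the fact (shown inside the proof of Theorem~\ref{th31}) that $\Delta(\zeta_ia)=\zeta_i\Im(a)$ and $\gamma$ acts coordinatewise through the $\zeta_i$, one verifies that $\zeta_ic_i(x)=\zeta_ia_i(x)\zeta_ig_i(x)$ as elements of the skew polynomial ring, so $c(x)\in\langle\zeta_0g_0(x),\dots,\zeta_sg_s(x)\rangle$.

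\textbf{The cardinality.} For the count I would argue componentwise. Because the sum $\mathcal{C}=\bigoplus_{i=0}^{s}\zeta_i\mathcal{C}_i$ is direct and the idempotents $\zeta_i$ are orthogonal with $\sum_i\zeta_i=1$, we have $|\mathcal{C}|=\prod_{i=0}^{s}|\mathcal{C}_i|$. By Theorem~\ref{thm1}(1), each $\mathcal{C}_i$ is a free left $\mathbb{F}_q$-module of dimension $n-\deg g_i(x)$, hence $|\mathcal{C}_i|=q^{\,n-\deg g_i(x)}$. Multiplying,
\begin{align*}
|\mathcal{C}|=\prod_{i=0}^{s}q^{\,n-\deg g_i(x)}=q^{\sum_{i=0}^{s}\left(n-\deg g_i(x)\right)}=q^{(s+1)n-\sum_{i=0}^{s}\deg g_i(x)},
\end{align*}
which is the claimed formula.

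\textbf{The main obstacle.} The routine part is the cardinality, which follows immediately from the direct-sum decomposition and the freeness established in Section~3. The delicate step is the reverse inclusion in the generating-set claim, specifically the identity $\zeta_ic_i(x)=\zeta_ia_i(x)\,\zeta_ig_i(x)$ in the \emph{noncommutative} ring $\mathscr{R}_{q,s}[x;\gamma,\Delta]$. Here one must be careful that the skew multiplication rule $xb=\gamma(b)x+\Delta(b)$ interacts correctly with the idempotents: because $\gamma$ preserves each component $\zeta_i\mathbb{F}_q$ and $\Delta$ satisfies $\Delta(\zeta_ib)=\zeta_i\Im(b)$, the factor $\zeta_i$ can be pulled through and absorbed by $\zeta_i^2=\zeta_i$, reducing the computation to the component ring $\zeta_i R\cong\mathbb{F}_q[x;\theta,\Im]$. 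Verifying this reduction—that multiplication in the $i$-th component of $\mathscr{R}_{q,s}[x;\gamma,\Delta]$ agrees with multiplication in $\mathbb{F}_q[x;\theta,\Im]$—is where the real content lies, and I would state it as the crux of the argument before concluding both inclusions.
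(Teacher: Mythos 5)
Your proof is correct and follows essentially the same route as the paper: decompose $\mathcal{C}$ over the orthogonal idempotents, invoke Theorem \ref{th31} and the Section 3 results to obtain the monic generators $g_i(x)$ of the components $\mathcal{C}_i$, prove both inclusions, and count via $|\mathcal{C}|=\prod_{i=0}^{s}|\mathcal{C}_i|=\prod_{i=0}^{s}q^{\,n-\deg g_i(x)}$. The only difference is one of polish rather than of method: you obtain $\langle \zeta_0g_0(x),\dots,\zeta_sg_s(x)\rangle\subseteq\mathcal{C}$ directly from the left-submodule property and make explicit the idempotent-compatibility identity $\zeta_i c_i(x)=\zeta_i a_i(x)\,\zeta_i g_i(x)$ (resting on $\gamma(\zeta_i a)=\zeta_i\theta(a)$ and $\Delta(\zeta_i a)=\zeta_i\Im(a)$), whereas the paper handles that inclusion by reducing coefficients $\zeta_i f_i(x)=\zeta_i s_i(x)$ and leaves the analogous identity implicit in the other direction.
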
	
\begin{proof}
	Let  $\mathcal{C}=\displaystyle\bigoplus_{i=0}^{s}\zeta_i\mathcal{C}_i$ be a $({\gamma,\Delta})$-cyclic code of length $n$ over $\mathscr{R}_{q,s}$. Then, by Theorem \ref{th31}, $\mathcal{C}_i$ is a $({\theta,\Im})$-cyclic code over $\mathbb{F}_q,$ for $i=0,1,2,\dots,s$. This implies that $\mathcal{C}_i=\langle g_i(x)\rangle \subseteq \mathbb{F}_q[x;\theta,\Im]/\langle x^n-1\rangle$ for $i=0,1,2,\dots,s$. Thus,
	$$\mathcal{C}=\left\{r(x)|r(x)=\sum_{i=0}^{s}\zeta_ig_i(x), g_i(x)\in \mathcal{C}_i\right\}.$$ Hence, $\mathcal{C}\subseteq \langle \zeta_0g_0(x),\zeta_1g_1(x),\dots,\zeta_sg_s(x)\rangle$.\\
	On the other hand, we consider $\zeta_0f_0(x)g_0(x)+\zeta_1f_1(x)g_1(x)+\cdots+\zeta_sf_s(x)g_s(x)\in \langle \zeta_0g_0(x),\\\zeta_1g_1(x),\dots,\zeta_sg_s(x)\rangle\subseteq \mathscr{R}_{q,s}[x;\gamma,\Delta]/\langle x^n-1\rangle$ where $f_i(x)\in \mathscr{R}_{q,s}[x;\gamma,\Delta]/\langle x^n-1\rangle$ for $i=0,1,2,\dots,s$. Then there exists $s_i(x)\in \mathbb{F}_q[x;\theta,\Im]/\langle x^n-1\rangle$ such that $\zeta_if_i(x)=\zeta_is_i(x)$ for $i=0,1,2,\dots,s$. This implies that $\langle \zeta_0g_0(x),\zeta_1g_1(x),\dots,\zeta_sg_s(x)\rangle\subseteq \mathcal{C}$. Thus, $\mathcal{C}=\langle \zeta_0g_0(x),\zeta_1g_1(x),\\\dots,\zeta_sg_s(x)\rangle$.
	Moreover, $|\mathcal{C}|=|\mathcal{C}_0||\mathcal{C}_1|\cdots|\mathcal{C}_s|=q^{n-\deg (g_0(x))}q^{n-\deg(g_1(x))}\cdots q^{n-\deg(g_s(x))}\\=q^{(s+1)n-\sum_{i=0}^{s} \deg( g_i(x))}$.
\end{proof}	
\begin{theorem}\label{thm dual contain}
	Let $\mathcal{C}=\displaystyle\bigoplus_{i=0}^{s}\zeta_i\mathcal{C}_i$ be a $(\gamma,\Delta)$-cyclic code of length $n$ over $\mathscr{R}_{q,s}$ and $x^n-1=h_i(x)g_i(x)=g_i(x)h'_i(x)$ for some monic skew polynomials $g_i(x),h_i(x),h'_i(x)\in\mathbb{F}_q[x;\theta,\Im] $ for $i=0,1,2,\dots,s$. Then $\mathcal{C}^\perp\subseteq \mathcal{C}$ if and only if $h_i'(x)h_i'(x)$ is divisible by $x^n-1$ from the right.
\end{theorem}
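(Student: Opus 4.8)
The plan is to reduce the dual-containment question over the ring $\mathscr{R}_{q,s}$ to the corresponding question over $\mathbb{F}_q$ for each component code $\mathcal{C}_i$, and then invoke Theorem \ref{th2} componentwise. The bridge between the two levels is the orthogonal idempotent decomposition $\mathscr{R}_{q,s}=\bigoplus_{i=0}^{s}\zeta_i\mathbb{F}_q$ together with the relations $\zeta_i\zeta_j=\delta_{ij}\zeta_i$, which force the Euclidean inner product to split along the idempotents.

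First I would establish that the dual respects the decomposition, namely $\mathcal{C}^\perp=\bigoplus_{i=0}^{s}\zeta_i\mathcal{C}_i^\perp$. Writing any $c,d\in\mathscr{R}_{q,s}^n$ as $c=\sum_{i}\zeta_ic^{(i)}$ and $d=\sum_{j}\zeta_jd^{(j)}$ with $c^{(i)},d^{(j)}\in\mathbb{F}_q^n$, and using $\zeta_i\zeta_j=0$ for $i\neq j$ together with $\zeta_i^2=\zeta_i$, the inner product collapses to $c\cdot d=\sum_{i=0}^{s}\zeta_i\big(c^{(i)}\cdot d^{(i)}\big)$. Since the coefficients of distinct $\zeta_i$ are independent in $\mathscr{R}_{q,s}$, this vanishes if and only if each $\mathbb{F}_q$-inner product $c^{(i)}\cdot d^{(i)}$ vanishes. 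Letting $d$ range over $\mathcal{C}=\bigoplus_i\zeta_i\mathcal{C}_i$ and varying each $d^{(i)}\in\mathcal{C}_i$ independently then shows $c\in\mathcal{C}^\perp$ if and only if $c^{(i)}\in\mathcal{C}_i^\perp$ for all $i$, which is exactly the claimed identity.

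Given this, componentwise containment takes over. Because the summands $\zeta_i\mathcal{C}_i$ and $\zeta_i\mathcal{C}_i^\perp$ lie in the mutually orthogonal pieces $\zeta_i\mathscr{R}_{q,s}^n$, the inclusion $\mathcal{C}^\perp\subseteq\mathcal{C}$ holds if and only if $\zeta_i\mathcal{C}_i^\perp\subseteq\zeta_i\mathcal{C}_i$ for each $i$, equivalently $\mathcal{C}_i^\perp\subseteq\mathcal{C}_i$ over $\mathbb{F}_q$ for each $i$. By Theorem \ref{th31} each $\mathcal{C}_i$ is a $(\theta,\Im)$-cyclic code generated by a right divisor $g_i(x)$ of $x^n-1$ with $x^n-1=h_i(x)g_i(x)=g_i(x)h_i'(x)$, so Theorem \ref{th2} applies verbatim and gives $\mathcal{C}_i^\perp\subseteq\mathcal{C}_i$ if and only if $h_i'(x)h_i'(x)$ is right-divisible by $x^n-1$. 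Conjoining these equivalences over $i=0,1,\dots,s$ yields the stated result.

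The main obstacle, and the only step that is not routine bookkeeping, is the decomposition of the dual in the second paragraph: one must verify that the inner product over $\mathscr{R}_{q,s}$ splits exactly along the idempotents and that the dual therefore decomposes as $\bigoplus_i\zeta_i\mathcal{C}_i^\perp$. The delicate point is the "only if" direction, which genuinely uses the freedom to vary each component $d^{(i)}$ separately; this relies on $\mathcal{C}$ being an honest direct sum of the $\zeta_i\mathcal{C}_i$, so that every tuple $(d^{(0)},\dots,d^{(s)})$ with $d^{(i)}\in\mathcal{C}_i$ is realized by some codeword of $\mathcal{C}$. Once the dual decomposition is secured, the remainder is a direct appeal to Theorem \ref{th2}.
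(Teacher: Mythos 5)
Your proposal is correct and takes essentially the same route as the paper: reduce $\mathcal{C}^\perp\subseteq\mathcal{C}$ to the componentwise containments $\mathcal{C}_i^\perp\subseteq\mathcal{C}_i$ via the idempotent decomposition, then apply Theorem \ref{th2} to each component. The only difference is one of rigor, in your favor: you explicitly prove the dual decomposition $\mathcal{C}^\perp=\bigoplus_{i=0}^{s}\zeta_i\mathcal{C}_i^\perp$ from the splitting of the inner product along the orthogonal idempotents, whereas the paper uses this identity implicitly in both directions without stating or proving it.
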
	
\begin{proof}
	Let $h_i'(x)h_i'(x)$ be divisible by $x^n-1$ from the right for $i=0,1,2,\dots,s$. Then, by Theorem \ref{th2}, we have $\mathcal{C}_i^\perp\subseteq \mathcal{C}_i$, $i=0,1,2,\dots,s$. This implies that $\displaystyle\bigoplus_{i=0}^{s}\zeta_i\mathcal{C}_i^\perp\subseteq \displaystyle\bigoplus_{i=0}^{s}\zeta_i\mathcal{C}_i$. Hence, $\mathcal{C}^\perp\subseteq \mathcal{C}$.\\
	Conversely, let $\mathcal{C}^\perp\subseteq \mathcal{C}$ , then $\displaystyle\bigoplus_{i=0}^{s}\zeta_i\mathcal{C}_i^\perp\subseteq \displaystyle\bigoplus_{i=0}^{s}\zeta_i\mathcal{C}_i$. Now, considering modulo $\zeta_i$, we get $\mathcal{C}_i^\perp\subseteq \mathcal{C}_i$ for $i=0,1,2,\dots,s$. Thus, $h_i'(x)h_i'(x)$ is divisible by $x^n-1$ on the right for $i=0,1,2,\dots,s$.
\end{proof}
The next corollary is a direct consequence of the Theorem \ref{thm dual contain}.
\begin{corollary}
	Let $\mathcal{C}=\langle g(x)\rangle$ be a $(\gamma,\Delta)$-cyclic code of length $n$ over $\mathscr{R}_{q,s}$ and $x^n-1=h_i(x)g_i(x)=g_i(x)h'_i(x)$ for some monic skew polynomials $g_i(x),h_i(x),h'_i(x)\in\mathbb{F}_q[x;\theta,\Im]$. Then $\mathcal{C}^\perp\subseteq \mathcal{C}$ if and only if $\mathcal{C}_i^\perp\subseteq \mathcal{C}_i$ for $i=0,1,2,\dots,s$.
\end{corollary}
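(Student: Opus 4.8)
The plan is to recognize that this corollary is precisely the middle link in the chain of equivalences already used to prove Theorem~\ref{thm dual contain}: there the factorization condition ``$h_i'(x)h_i'(x)$ is divisible by $x^n-1$ from the right'' was tied to $\mathcal{C}^\perp\subseteq\mathcal{C}$ only by first passing through the componentwise containments $\mathcal{C}_i^\perp\subseteq\mathcal{C}_i$. Extracting that intermediate equivalence is all that is required, so I would give a self-contained decomposition argument and not invoke the factorization condition at all.

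The key structural fact I would establish first is that the Euclidean dual respects the CRT decomposition, namely $\mathcal{C}^\perp=\bigoplus_{i=0}^{s}\zeta_i\mathcal{C}_i^\perp$. To see this, write arbitrary elements $c=\sum_{i=0}^{s}\zeta_ic_i$ and $c'=\sum_{i=0}^{s}\zeta_ic_i'$ of $\mathscr{R}_{q,s}^n$ with $c_i,c_i'\in\mathbb{F}_q^n$. Using the orthogonality relations ($\zeta_i\zeta_j=\zeta_i$ when $i=j$ and $\zeta_i\zeta_j=0$ otherwise), the inner product collapses to
\[
c\cdot c'=\sum_{i=0}^{s}\zeta_i\,(c_i\cdot c_i'),
\]
a sum over the independent components $\zeta_i\mathscr{R}_{q,s}\cong\mathbb{F}_q$. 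Since each $\zeta_i$ is a nonzero idempotent, $\zeta_i(c_i\cdot c_i')=0$ forces $c_i\cdot c_i'=0$. As $c'$ ranges over $\mathcal{C}=\bigoplus_i\zeta_i\mathcal{C}_i$, each $c_i'$ ranges independently over $\mathcal{C}_i$; hence $c\in\mathcal{C}^\perp$ if and only if $c_i\in\mathcal{C}_i^\perp$ for every $i$, which is exactly the asserted decomposition of $\mathcal{C}^\perp$.

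With this in hand, both implications become formal. If $\mathcal{C}_i^\perp\subseteq\mathcal{C}_i$ for all $i$, then summing the inclusions $\zeta_i\mathcal{C}_i^\perp\subseteq\zeta_i\mathcal{C}_i$ over $i$ gives $\mathcal{C}^\perp=\bigoplus_i\zeta_i\mathcal{C}_i^\perp\subseteq\bigoplus_i\zeta_i\mathcal{C}_i=\mathcal{C}$. Conversely, from $\mathcal{C}^\perp\subseteq\mathcal{C}$, i.e. $\bigoplus_i\zeta_i\mathcal{C}_i^\perp\subseteq\bigoplus_i\zeta_i\mathcal{C}_i$, one recovers each component by projecting: multiplying the containment by the idempotent $\zeta_i$ (equivalently, reducing modulo $\zeta_i$ as in the proof of Theorem~\ref{thm dual contain}) annihilates every summand except the $i$-th and yields $\zeta_i\mathcal{C}_i^\perp\subseteq\zeta_i\mathcal{C}_i$, whence $\mathcal{C}_i^\perp\subseteq\mathcal{C}_i$.

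I do not expect a genuine obstacle, since the statement is a repackaging of Theorem~\ref{thm dual contain}; the only point needing care is the justification that the dual splits as $\bigoplus_i\zeta_i\mathcal{C}_i^\perp$ and, in the reverse direction, that multiplication by the orthogonal idempotent $\zeta_i$ legitimately isolates the $i$-th component of a direct-sum containment. Both rest on the single fact that the $\zeta_i$ form a complete system of orthogonal idempotents decomposing $\mathscr{R}_{q,s}$, after which the argument proceeds verbatim.
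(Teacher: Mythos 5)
Your proof is correct, but it takes a genuinely different route from the paper's. The paper disposes of this corollary in one line, declaring it a direct consequence of Theorem~\ref{thm dual contain}: formally, one chains that theorem's biconditional ($\mathcal{C}^\perp\subseteq\mathcal{C}$ if and only if each $h_i'(x)h_i'(x)$ is right-divisible by $x^n-1$) with the field-level biconditional of Theorem~\ref{th2} (right-divisibility if and only if $\mathcal{C}_i^\perp\subseteq\mathcal{C}_i$), so the divisibility condition serves as the intermediary. You instead bypass the polynomial condition entirely and argue structurally, first proving that the Euclidean dual respects the idempotent decomposition, $\mathcal{C}^\perp=\bigoplus_{i=0}^{s}\zeta_i\mathcal{C}_i^\perp$, after which both implications are formal (summing componentwise inclusions one way, projecting by the orthogonal idempotents the other way). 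Your argument is sound: the collapse $c\cdot c'=\sum_{i=0}^{s}\zeta_i\,(c_i\cdot c_i')$, the fact that $\zeta_i a=0$ with $a\in\mathbb{F}_q$ forces $a=0$, and the independence of the components $c_i'$ as $c'$ ranges over $\mathcal{C}$ are all used correctly. What each approach buys: the paper's route is shorter and reuses results already proved, but it carries the factorization hypothesis as baggage and, notably, its own proof of Theorem~\ref{thm dual contain} silently assumes exactly the dual decomposition $\mathcal{C}^\perp=\bigoplus_i\zeta_i\mathcal{C}_i^\perp$ in both directions (when passing between $\mathcal{C}^\perp\subseteq\mathcal{C}$ and the componentwise containments by ``considering modulo $\zeta_i$''). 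Your route is self-contained, shows the generator-polynomial hypotheses in the corollary's statement are actually superfluous, and supplies the justification for the one fact the paper leaves implicit.
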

\section{Constructions of  quantum codes and comparison with the existing codes}
The quantum error-correcting codes play a pivotal role in quantum information theory. For a long time, it has been difficult to provide a satisfactory solution to the problem of protecting information from quantum noises. However, after the introduction of the first quantum error-correcting codes by Shor et al. \cite{Shor}, a stream of great developments has emerged in information theory. Let $H_q(\mathbb{C})$ be a $q$-dimensional Hilbert vector space. Then the set of $n$-fold tensor product $H_q^n(\mathbb{C})=\underbrace{H_q(\mathbb{C})\otimes H_q(\mathbb{C})\otimes\cdots  \otimes H_q(\mathbb{C})}_{n\rm\ times}$ is a $q^n$-dimensional Hilbert space. Here, a $q^k$ dimensional subspace of $H_q^n(\mathbb{C})$ is called a quantum code with parameters $[[n,k,d]]_q$ where $d$ is the minimum distance, and $k$ is the dimension of the quantum code. Also, $\mathcal{C}$ is dual-containing if  $\mathcal{C}^\perp\subseteq \mathcal{C}$. Moreover, in $1997$, the quantum Singleton bound for binary codes was introduced by
Knill and Laflamme \cite{Knill}. In $1998$, Calderbank et al. \cite{Calderbank98} provided the quantum Singleton bound for all codes over finite fields as $k+2d\leq n+2.$
A quantum code is said to be a quantum MDS code if it attains the Singleton bound.

In this section, we first briefly review the mathematical representation of the quantum states, the operators acting on these states, and then we construct quantum codes from $(\gamma,\Delta)$-cyclic codes over $\mathscr{R}_{q,s}$.
\subsection{Quantum states and operators over qudits}
\label{sub41}
 For a quantum system with $\Gamma$ levels, the state of a unit system, a qudit, is a superposition of $\Gamma$ basis states of the system given by
 \begin{align*}
\ket{\psi}_\Gamma &= \sum_{i=0}^{\Gamma-1} a_i\ket{i}_\Gamma, ~~ \text{ where } a_i\in \mathbb{C} \text{ and } \sum_{i=0}^{\Gamma-1} |a_i|^2 = 1,
 \end{align*}
where the subscript $\Gamma$ refers to dimension of the unit quantum system.  Also, $\ket{\psi}_\Gamma=[a_0~a_1~\dots ~a_{\Gamma-1}]^{\mathrm{T}}$ and  $\ket{i}_\Gamma = \mathbf{e}_{{(i+1)}}^{{(\Gamma)}}$, where $\mathbf{e}_{{(i+1)}}^{{(\Gamma)}}$ is a vector in $\mathbb{C}^\Gamma$ with the $(i+1)^{\mathrm{st}}$ element being $1$ and rest of the elements being $0$.

From the second postulate of quantum mechanics, the operators acting on a quantum system belong to the unitary group $\mathrm{U}(\Gamma)$, which is a subset of $\mathbb{C}^{\Gamma \times \Gamma}$. As the cardinality of $\mathrm{U}(\Gamma)$ is infinite, we represent its elements in terms of a basis of $\mathbb{C}^{\Gamma \times \Gamma}$.

  For $\Gamma=2$, the Pauli basis $\mathcal{P}$ is the popularly chosen unitary basis.
  \begin{equation*}
  \mathcal{P}\!=\!\left\{\!\mathrm{I} \!=\! \begin{bmatrix}
  1 &0\\0&1
  \end{bmatrix}\!,\mathrm{X} \!=\! \begin{bmatrix}
  0&1\\1&0
  \end{bmatrix}\!,\mathrm{Y} \!=\! \begin{bmatrix}
  0 &-\mathrm{i}\\\mathrm{i}&0
  \end{bmatrix}\!,\mathrm{Z} \!=\! \begin{bmatrix}
  1 &0\\0&-1
  \end{bmatrix}\right\},
  \end{equation*}
where $\mathrm{i} = \sqrt{-1}$.
The generalized version of the Pauli group for arbitrary $\Gamma$, known as the Weyl-Heisenberg group, is defined by
\begin{align}
 \mathcal{G}_{\Gamma}^{(g)} = \left\{\omega_\Gamma^l\mathrm{X}_\Gamma(a)\mathrm{Z}_\Gamma(b)|a,b,l \in\mathbb{Z}_\Gamma\right\},\nonumber
\end{align}
 where $\omega_\Gamma \!=\! \mathrm{e}^{\!\frac{\mathrm{i}2\pi}{\Gamma}}$, $\mathrm{X}_\Gamma(a)\!\ket{c}_\Gamma \!:=\!\! \ket{(a\!+\!c)~\mathrm{mod}~\Gamma}_\Gamma$, and $\mathrm{Z}_\Gamma(b)\!\ket{c}_\Gamma := \omega_\Gamma^{bc}\!\ket{c}_\Gamma$ for every $c \in \mathbb{Z}_\Gamma$.
 The generalized Pauli basis $\mathcal{P}$ \cite{Mie}
  \begin{align}
 \mathcal{G}_\Gamma = \left\{\mathrm{X}_\Gamma(a)\mathrm{Z}_\Gamma(b)|a,b \in\mathbb{Z}_\Gamma\right\}.\label{eqn:Gen_Pauli_Basis}
\end{align}
 is obtained by neglecting the phase $\omega_\Gamma^l$ in $\mathcal{G}_\Gamma$. The basis operator of the form $\mathrm{X}_\Gamma(a)\mathrm{Z}_\Gamma(b)$ is uniquely represented by a vector of length $2$ defined over ring $\mathbb{Z}_\Gamma$, namely $[a|b]_\Gamma$ as
\begin{align*}
\mathrm{X}_\Gamma(a)\mathrm{Z}_\Gamma(b) \equiv  [a|b]_\Gamma.
\end{align*}

Next, we define a trace operation over the field elements as follows:
\begin{definition}[\cite{Ketkar}]
The field trace $\mathrm{Tr}_{p^m/p}(\cdot)$ is an $\mathbb{F}_p$-linear function $\mathrm{Tr}_{p^m/p}: \mathbb{F}_{p^m} \rightarrow \mathbb{F}_{p}$, given by $\mathrm{Tr}_{p^m/p}(\kappa)=\sum_{i=0}^{m-1}\kappa^{p^i}$, where $\kappa \in \mathbb{F}_{p^m}$.
\end{definition}
The function $\mathrm{Tr}_{p^m/p}(\cdot)$ is said to be $\mathbb{F}_p$-linear as $\mathrm{Tr}_{p^m/p}(a\kappa+b\chi) = a\hspace{0.07cm}\mathrm{Tr}_{p^m/p}(\kappa)+b\hspace{0.07cm}\mathrm{Tr}_{p^m/p}(\chi)$, for all $a,b \in \mathbb{F}_p$ and $\kappa, \chi \in \mathbb{F}_{p^m}$. We note that for an element $b \in \mathbb{F}_p$, $\mathrm{Tr}_{p^m/p}(b) = b$.

The group that generates the operator basis for $\mathbb{C}^{p^m \times p^m}$ defined in terms of the field based representation of basis states is \cite{Lidar}
 \begin{align}
 \mathcal{G}_{p^m}^{(g)} \!=\! \begin{cases} \left\{\omega^l\mathrm{X}^{(p^m)}(\kappa)\mathrm{Z}^{(p^m)}(\chi)\bigg|\kappa,\!\chi \in \mathbb{F}_{p^m} \text{ and }l \!\in\!\mathbb{Z}_p\right\}, & \text{when characteristic $p$ is odd},\\
 \left\{\mathrm{i}^{g}\omega^l\mathrm{X}^{(p^m)}(\kappa)\mathrm{Z}^{(p^m)}(\chi)\bigg|\kappa,\!\chi \in \mathbb{F}_{p^m} \text{ and }g,l \!\in\!\mathbb{Z}_p\right\}, & \text{when characteristic $p$ is even},\\
 \end{cases} \nonumber
\end{align}
where $\omega = \mathrm{e}^{\frac{\mathrm{i}2\pi}{p}}$, $\mathrm{i}=\sqrt{-1}$,
\begin{align}
\mathrm{X}^{(p^m)}(\kappa)\ket{\theta}_{p^m} &:= \ket{\kappa+\theta}_{p^m},~~~~~\forall \theta \in \mathbb{F}_{p^m},\label{eqn:X_Definition_Qudit}\\
\mathrm{Z}^{(p^m)}(\chi)\ket{\theta}_{p^m} &:= \omega^{\mathrm{Tr}_{p^m/p}(\chi\theta)}\ket{\theta}_{p^m},~~~\forall \theta \in \mathbb{F}_{p^m}. \label{eqn:Z_Definition_Qudit}
\end{align}
We note that the factor $\mathrm{i}^{g}$ is included in the basis $\mathcal{G}_{p^m}$ when the characteristic $p$ is even as $\mathrm{iI}$ belongs to $\mathcal{P}$ and $\mathcal{P} = \mathcal{G}_{p^m}$ for $p=2$ and $m=1$.

The operator basis for $\mathbb{C}^{p^m \times p^m}$ is
\begin{align}
 \mathcal{G}_{p^m} \!=\! \left\{\mathrm{X}^{(p^m)}(\kappa)\mathrm{Z}^{(p^m)}(\chi)\bigg|\kappa,\!\chi \in \mathbb{F}_{p^m}\right\}, \label{eqn:FieldErrorBasis}
\end{align}

 From equations \eqref{eqn:X_Definition_Qudit} and \eqref{eqn:Z_Definition_Qudit}, $\mathrm{X}^{(p^m)}(\kappa)$ and $\mathrm{Z}^{(p^m)}(\chi)$ are given by
 \begin{align}
 \mathrm{X}^{(p^m)}(\kappa) &= \underset{\theta \in \mathbb{F}_{p^m}}{\sum}\ket{\kappa + \theta}\bra{\theta}\label{eqn:X_Expression_Qudit},\\
  \mathrm{Z}^{(p^m)}(\chi) &= \underset{\theta \in \mathbb{F}_{p^m}}{\sum}\omega^{\mathrm{Tr}_{p^m/p}(\chi\theta)}\ket{\theta}\bra{\theta}.\label{eqn:Z_Expression_Qudit}
 \end{align}
 The basis operator of the form $\mathrm{X}^{(p^m)}(\kappa)\mathrm{Z}^{(p^m)}(\chi)$ is uniquely represented by a vector of length $2$ defined over field $\mathbb{F}_{p^m}$, namely $[\kappa|\chi]_{p^m}$
\begin{align*}
\mathrm{X}^{(p^m)}(\kappa)\mathrm{Z}^{(p^m)}(\chi) \equiv  [\kappa|\chi]_{p^m}.
\end{align*}
The above defined operators will be used in Section \ref{sec5.3} during the encoding and error correction procedures of our proposed quantum codes.  In order to construct quantum error-correcting codes, we first derive a necessary and sufficient condition for $(\gamma,\Delta)$-cyclic codes to be dual containing. Note that a quantum code $[[n,k,d]]_q$ is said to be better than $[[n',k',d']]_q$ if any one of the following or both hold:
\begin{enumerate}
	\item $d>d'$ when the code rate $\frac{k}{n}=\frac{k'}{n'}$ (Larger distance with same code rate).
	\item $\frac{k}{n}>\frac{k'}{n'}$ when the distance $d=d'$ (Larger code rate with same distance).
\end{enumerate}
Next, we define a Gray map and study $\mathbb{F}_q$-images of $(\gamma,\Delta)$-cyclic codes. Let $GL_{s+1}(\mathbb{F}_q)$ be the set of all $(s+1)\times (s+1)$ invertible matrices over $\mathbb{F}_q$. Now, $\varphi:\mathscr{R}_{q,s}\longrightarrow \mathbb{F}_q^{s+1}$ define by $$\varphi(r)=(r_0,r_1,\dots,r_s)G,$$ where $r=\sum_{i=0}^{s}\zeta_ir_i\in \mathscr{R}_{q,s}$, $G\in GL_{s+1}(\mathbb{F}_q)$ such that $GG^T=kI_{s+1}$, $G^T$ is the transpose matrix of $G$, $k\in \mathbb{F}_q^*$ and $I_{s+1}$ is the identity matrix of order $s+1$. It is easy to check that $\varphi$ is a bijection and can be extended over $\mathscr{R}_{q,s}^n$ componentwise. If we define Gray distance for a linear code $\mathcal{C}$ by $d_G(\mathcal{C})=d_H(\varphi(\mathcal{C}))$, then $\varphi$ is a linear distance preserving map from $(\mathscr{R}_{q,s}^n,d_G)$ to $(\mathbb{F}_q^{n(s+1)},d_H)$, where $d_H$ is the Hamming distance in $\mathbb{F}_q$.
\begin{proposition}\label{prop1}
	The Gray map $\varphi$ is an $\mathbb{F}_q$-linear and distance preserving map from $\mathscr{R}_{q,s}^n$ (Gray distance) to $\mathbb{F}_q^{(s+1)n}$ (Hamming distance).
\end{proposition}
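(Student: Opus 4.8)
The plan is to factor $\varphi$ through the Chinese Remainder decomposition established in Section~2 and then verify the two required properties separately. First I would use the fact that every $r\in\mathscr{R}_{q,s}$ admits a unique representation $r=\sum_{i=0}^{s}\zeta_i r_i$ with $r_i\in\mathbb{F}_q$, so that the coordinate assignment $\rho:r\mapsto (r_0,r_1,\dots,r_s)$ is a well-defined map $\mathscr{R}_{q,s}\to\mathbb{F}_q^{s+1}$. Since $\sum_{i=0}^{s}\zeta_i=1$ and $\zeta_i\zeta_j=\delta_{ij}\zeta_i$, addition and $\mathbb{F}_q$-scalar multiplication in $\mathscr{R}_{q,s}$ act coordinatewise on these components, so $\rho$ is an $\mathbb{F}_q$-linear isomorphism. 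Observing that $\varphi$ is the composition of $\rho$ with right multiplication by the fixed matrix $G\in GL_{s+1}(\mathbb{F}_q)$, and that right multiplication by $G$ is itself $\mathbb{F}_q$-linear, I conclude that $\varphi(r+r')=\varphi(r)+\varphi(r')$ and $\varphi(ar)=a\varphi(r)$ for all $r,r'\in\mathscr{R}_{q,s}$ and $a\in\mathbb{F}_q$. Extending $\varphi$ componentwise to $\mathscr{R}_{q,s}^n$ preserves these identities, which settles the $\mathbb{F}_q$-linearity claim.

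Next I would record that $\varphi$ is a bijection. From $GG^T=kI_{s+1}$ with $k\in\mathbb{F}_q^*$ one gets $(\det G)^2=k^{s+1}\neq 0$, so $G$ is invertible; combined with the bijectivity of $\rho$ this shows $\varphi$ is bijective on $\mathscr{R}_{q,s}$ and hence on $\mathscr{R}_{q,s}^n$. This is the only place where the orthogonality hypothesis on $G$ is actually needed for the present statement.

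Finally, for distance preservation I would define the Gray weight of $x\in\mathscr{R}_{q,s}^n$ by $w_G(x)=w_H(\varphi(x))$ and the Gray distance by $d_G(x,y)=w_G(x-y)$, which is consistent with the convention $d_G(\mathcal{C})=d_H(\varphi(\mathcal{C}))$ adopted above. Using the $\mathbb{F}_q$-linearity just proved, for any $x,y\in\mathscr{R}_{q,s}^n$ we have
$$d_H(\varphi(x),\varphi(y))=w_H(\varphi(x)-\varphi(y))=w_H(\varphi(x-y))=w_G(x-y)=d_G(x,y),$$
so $\varphi$ carries the Gray distance to the Hamming distance exactly. I do not expect a genuine obstacle: the whole argument is bookkeeping around the CRT splitting, and the only substantive point is the invertibility of $G$, which guarantees that $\varphi$ loses no information and that Hamming weights on the image faithfully reflect Gray weights on the source.
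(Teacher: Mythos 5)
Your proposal is correct and takes essentially the same route as the paper: both arguments rest on the uniqueness of the CRT coordinates $r=\sum_{i=0}^{s}\zeta_i r_i$, the $\mathbb{F}_q$-linearity of passing to those coordinates followed by right multiplication by $G$, and the identity $d_H(\varphi(x),\varphi(y))=w_H(\varphi(x-y))=d_G(x,y)$, with the paper merely writing out the coordinate computation explicitly where you compose $\rho$ with the matrix map. The only stray remark is your claim that the hypothesis $GG^T=kI_{s+1}$ is ``needed'' for bijectivity: since $G\in GL_{s+1}(\mathbb{F}_q)$ is invertible by assumption, neither bijectivity nor this proposition uses the orthogonality condition at all (it is used later, in Lemma \ref{lemdual}, to show $\varphi$ preserves duality).
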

\begin{proof}
	Let $a=(a_0,a_1,\dots,a_{n-1}),~b=(b_0,b_1,\dots,b_{n-1})\in \mathscr{R}_{q,s}^n$, where $a_j= \sum_{i=0}^{s} \zeta_ia_j^i $, $b_j= \sum_{i=0}^{s} \zeta_ib_j^i $  for $j=0,1,\dots,n-1$ and $a_j^i,~b_j^i\in \mathbb{F}_q$. Then
	\begin{align*}
		\varphi(a+b)=&\varphi(a_0+b_0,a_1+b_1,\dots,a_{n-1}+b_{n-1})\\
		=&\varphi (\zeta_0(a_0^0+b_0^0)+\zeta_1(a_0^1+b_0^1)+\cdots+\zeta_{s}(a_0^{s}+b_0^{s}),\dots, \zeta_0(a_{n-1}^0+b_{n-1}^0)\\&+\zeta_1(a_{n-1}^1+b_{n-1}^1)+\cdots+\zeta_{s}(a_{n-1}^{s}+b_{n-1}^{s}))\\
		=&[(a_0^0+b_0^0,a_0^1+b_0^1,\dots,a_{0}^{s}+b_0^{s})G,\dots, (a_{n-1}^0+b_{n-1}^0,a_{n-1}^1+b_{n-1}^1,\dots,\\&a_{n-1}^{s}+b_{n-1}^{s})G]\\
		=&[(a_0^0,a_0^1,\dots,a_{0}^{s})G,\dots, (a_{n-1}^0,a_{n-1}^1,\dots,a_{n-1}^{s})G]+[(b_0^0,b_0^1,\dots,b_{0}^{s+1})G,\dots,\\& (b_{n-1}^0,b_{n-1}^1,\dots,b_{n-1}^{s})G]\\
		=&\varphi(a)+\varphi(b).
	\end{align*}
	Now, for any $\lambda\in \mathbb{F}_q$, we have
	\begin{align*}
		\varphi(\lambda a)&=\varphi(\lambda a_0,\lambda a_1,\dots,\lambda a_{n-1})\\
		&=\varphi (\lambda \zeta_0a_0^0+\lambda \zeta_1a_0^1+\cdots+\lambda \zeta_sa_0^s,\dots, \lambda \zeta_0a_{n-1}^0+\lambda \zeta_1a_{n-1}^1+\cdots+\lambda \zeta_sa_{n-1}^s)\\
		&=[(\lambda a_0^0,\lambda a_0^1,\dots,\lambda a_{0}^s)G,\dots, (\lambda a_{n-1}^0,\lambda a_{n-1}^1,\dots,\lambda a_{n-1}^s)G]\\
		&=[\lambda(a_0^0,a_0^1,\dots,a_{0}^s)G,\dots, \lambda(a_{n-1}^0,a_{n-1}^1,\dots,a_{n-1}^s)G]\\
		&=\lambda[(a_0^0,a_0^1,\dots,a_{0}^s)G,\dots, (a_{n-1}^0,a_{n-1}^1,\dots,a_{n-1}^s)G]\\
		&=\lambda\varphi(a).
	\end{align*}
	Moreover, $d_G(a,b)=\omega_G(a-b)=\omega_H(\varphi(a-b))=\omega_H(\varphi(a)-\varphi(b))=d_H(\varphi(a),\varphi(b))$. Hence, $\varphi$ is a distance preserving map.
\end{proof}
\begin{theorem}
	If $\mathcal{C}$ is an $[n,k,d_G]$ linear code over $\mathscr{R}_{q,s}$, then $\varphi(\mathcal{C})$ is a $[(s+1)n,k,d_H]$ linear code over $\mathbb{F}_q$.
\end{theorem}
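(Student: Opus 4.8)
The plan is to read off the three parameters $(s+1)n$, $k$, and $d_H$ one at a time, leaning on Proposition \ref{prop1} for the bulk of the work. The length is immediate: by construction $\varphi$ sends each of the $n$ coordinates of a word in $\mathscr{R}_{q,s}^n$ to a block of $s+1$ coordinates over $\mathbb{F}_q$, so $\varphi(\mathcal{C})\subseteq \mathbb{F}_q^{(s+1)n}$ has block length $(s+1)n$. The remaining two parameters reduce to linearity, a cardinality count, and the isometry property, all of which are either proved in or immediate from Proposition \ref{prop1}.

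For $\mathbb{F}_q$-linearity, I would first observe that although $\mathcal{C}$ is an $\mathscr{R}_{q,s}$-submodule, it is in particular an $\mathbb{F}_q$-subspace of $\mathscr{R}_{q,s}^n$ after restricting scalars along the unital embedding $\mathbb{F}_q\hookrightarrow\mathscr{R}_{q,s}$, $\lambda\mapsto \lambda\cdot 1$. Proposition \ref{prop1} gives that $\varphi$ is additive and $\mathbb{F}_q$-linear; hence for $u,w\in\mathcal{C}$ and $\lambda\in\mathbb{F}_q$ we have $\varphi(u)+\varphi(w)=\varphi(u+w)\in\varphi(\mathcal{C})$ and $\lambda\varphi(u)=\varphi(\lambda u)\in\varphi(\mathcal{C})$, so $\varphi(\mathcal{C})$ is an $\mathbb{F}_q$-linear code of the claimed length.

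The dimension and minimum distance both follow from the fact that $\varphi$ is a bijection on $\mathscr{R}_{q,s}^n$, which holds because $G\in GL_{s+1}(\mathbb{F}_q)$ is invertible. Injectivity yields $|\varphi(\mathcal{C})|=|\mathcal{C}|=q^{k}$, where $k=\log_q|\mathcal{C}|$ is the dimension parameter of $\mathcal{C}$; consequently $\dim_{\mathbb{F}_q}\varphi(\mathcal{C})=k$. For the distance, Proposition \ref{prop1} gives $d_H(\varphi(u),\varphi(w))=d_G(u,w)$ for all $u,w$, and injectivity ensures $\varphi(u)\neq\varphi(w)$ whenever $u\neq w$; minimizing over distinct codewords then gives $d_H(\varphi(\mathcal{C}))=d_G(\mathcal{C})=d_G$, so the minimum Hamming distance of the image equals the minimum Gray distance $d_G$ of $\mathcal{C}$.

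Since Proposition \ref{prop1} already packages both the $\mathbb{F}_q$-linearity and the isometry, I do not expect any genuine obstacle here; the statement is essentially a bookkeeping corollary of that proposition together with the bijectivity of $\varphi$. The only point needing care is the interpretation of the dimension parameter $k$ over the non-field alphabet $\mathscr{R}_{q,s}$: one should read $k$ as $\log_q|\mathcal{C}|$, equivalently as $\sum_{i=0}^{s}\dim_{\mathbb{F}_q}\mathcal{C}_i$ via the decomposition $\mathcal{C}=\bigoplus_{i=0}^{s}\zeta_i\mathcal{C}_i$, so that the cardinality count $|\varphi(\mathcal{C})|=q^{k}$ correctly identifies $\varphi(\mathcal{C})$ as a $k$-dimensional $\mathbb{F}_q$-subspace.
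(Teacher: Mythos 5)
Your proposal is correct and follows essentially the same route as the paper, which simply cites Proposition \ref{prop1} together with the definition of the Gray map; you have merely spelled out the bookkeeping (length, restriction of scalars for $\mathbb{F}_q$-linearity, the cardinality count via bijectivity of $\varphi$, and the isometry for the distance) that the paper leaves implicit. Your remark that $k$ must be read as $\log_q|\mathcal{C}|=\sum_{i=0}^{s}\dim_{\mathbb{F}_q}\mathcal{C}_i$ over the non-field alphabet is a useful clarification but not a departure from the paper's argument.
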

\begin{proof}
	Follows directly from Proposition \ref{prop1} and the definition of the Gray map.
\end{proof}
The Gray map $\varphi$ preserves the orthogonality as shown in the
next result.
\begin{lemma}\label{lemdual}
	Let $\mathcal{C}$ be a $(\gamma,\Delta)$-cyclic code of length $n$ over $\mathscr{R}_{q,s}$. Then $\varphi(\mathcal{C})^\perp=\varphi(\mathcal{C}^\perp)$. Further, $\mathcal{C}$ is self-dual if and only if $\varphi(\mathcal{C})$ is self-dual.
\end{lemma}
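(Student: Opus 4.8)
The plan is to establish the identity $\varphi(\mathcal{C})^\perp=\varphi(\mathcal{C}^\perp)$ by first proving the inclusion $\varphi(\mathcal{C}^\perp)\subseteq\varphi(\mathcal{C})^\perp$ through a direct inner-product computation, and then upgrading it to an equality by a counting argument; the self-dual statement will follow formally since $\varphi$ is a bijection. The whole argument hinges on the defining property $GG^{T}=kI_{s+1}$ of the matrix $G$ underlying the Gray map, together with the orthogonality relations $\zeta_i\zeta_j=\delta_{ij}\zeta_i$ of the idempotents.

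First I would take $a\in\mathcal{C}$ and $b\in\mathcal{C}^\perp$, write $a_j=\sum_{i=0}^{s}\zeta_i a_j^{i}$ and $b_j=\sum_{i=0}^{s}\zeta_i b_j^{i}$ with $a_j^{i},b_j^{i}\in\mathbb{F}_q$, and compute the Euclidean inner product of the Gray images block by block. For each coordinate $j$,
\begin{align*}
\varphi(a_j)\cdot\varphi(b_j)=(a_j^{0},\dots,a_j^{s})\,GG^{T}\,(b_j^{0},\dots,b_j^{s})^{T}=k\sum_{i=0}^{s}a_j^{i}b_j^{i},
\end{align*}
so that $\varphi(a)\cdot\varphi(b)=k\sum_{i=0}^{s}(a^{i}\cdot b^{i})$, where $a^{i}\cdot b^{i}$ is the inner product in $\mathbb{F}_q^{n}$. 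On the other hand, using $\zeta_i\zeta_j=\delta_{ij}\zeta_i$ one obtains $a\cdot b=\sum_{i=0}^{s}\zeta_i(a^{i}\cdot b^{i})$, and since $a\cdot b=0$ the uniqueness of the decomposition over the $\zeta_i$ forces $a^{i}\cdot b^{i}=0$ for every $i$. Hence $\varphi(a)\cdot\varphi(b)=0$, which proves $\varphi(\mathcal{C}^\perp)\subseteq\varphi(\mathcal{C})^\perp$.

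To obtain equality I would argue by cardinalities. Since $\varphi$ is a bijection, $|\varphi(\mathcal{C}^\perp)|=|\mathcal{C}^\perp|$ and $|\varphi(\mathcal{C})|=|\mathcal{C}|$. Because $\mathscr{R}_{q,s}$ is a finite Frobenius ring, $|\mathcal{C}|\,|\mathcal{C}^\perp|=q^{(s+1)n}$ (alternatively this follows from the decompositions $\mathcal{C}=\bigoplus_i\zeta_i\mathcal{C}_i$ and $\mathcal{C}^\perp=\bigoplus_i\zeta_i\mathcal{C}_i^\perp$ together with $\dim\mathcal{C}_i+\dim\mathcal{C}_i^\perp=n$). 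Over the field $\mathbb{F}_q$ we likewise have $|\varphi(\mathcal{C})|\,|\varphi(\mathcal{C})^\perp|=q^{(s+1)n}$. Combining these gives $|\varphi(\mathcal{C})^\perp|=|\mathcal{C}^\perp|=|\varphi(\mathcal{C}^\perp)|$, so the inclusion established above is forced to be an equality.

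Finally, the self-dual equivalence is immediate: as $\varphi$ is a bijection, $\mathcal{C}=\mathcal{C}^\perp$ holds if and only if $\varphi(\mathcal{C})=\varphi(\mathcal{C}^\perp)$, and by the identity just proved the right-hand side equals $\varphi(\mathcal{C})^\perp$; thus $\mathcal{C}$ is self-dual if and only if $\varphi(\mathcal{C})$ is self-dual. I expect the main obstacle to be precisely the reverse inclusion: the inner-product computation only yields $\sum_i(a^{i}\cdot b^{i})=0$ for elements of $\varphi(\mathcal{C})^\perp$, which does \emph{not} force the individual terms to vanish, so one cannot argue componentwise and the Frobenius-ring counting step becomes essential.
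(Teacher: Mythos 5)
Your proof is correct and follows essentially the same route as the paper: a direct computation with $GG^{T}=kI_{s+1}$ and the idempotent orthogonality to get $\varphi(\mathcal{C}^\perp)\subseteq\varphi(\mathcal{C})^\perp$, followed by a cardinality argument to force equality. The only difference is that you spell out the counting step ($|\mathcal{C}|\,|\mathcal{C}^\perp|=q^{(s+1)n}$ via the Frobenius property or the component decomposition) which the paper leaves implicit in its appeal to bijectivity of $\varphi$, and you explicitly handle the self-dual equivalence, which the paper's proof omits.
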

\begin{proof}
	Let $c=(c_0,c_1,\dots,c_{n-1})\in \mathcal{C}$ and $d=(d_0,d_1,\dots,d_{n-1})\in \mathcal{C}^{\perp}$ where $a_j= \sum_{i=0}^{s} \zeta_ic_j^i $, $b_j= \sum_{i=0}^{s} \zeta_id_j^i $  for $j=0,1,\dots,n-1$ and $a_j^i,~b_j^i\in \mathbb{F}_q$. Now, $c\cdot d=\sum_{j=0}^{n-1}c_jd_j=0$ gives $\sum_{j=0}^{n-1}(c^0_jd^0_j+c^1_jd^1_j+\dots+c^s_jd^s_j)=0$. Again,
	\begin{align*}
		&\varphi(c)=[(c_0^0,c_0^1,\dots,c_0^s)G,\dots, (c_{n-1}^0,c_{n-1}^1,\dots,c_{n-1}^s)G]=(\alpha_0G,\dots,\alpha_{n-1}G)\\
		&\text{and}\\
		&\varphi(d)=[(d_0^0,d_0^1,\dots,d_0^s)G,\dots, (d_{n-1}^0,d_{n-1}^1,\dots,d_{n-1}^s)G]=(\beta_0G,\dots,\beta_{n-1}G),
	\end{align*}{}
	where $\alpha_j=(c_j^0,c_j^1,\dots,c_j^s)$ and $\beta_j=(d_j^0,d_j^1,\dots,d_j^s)$ for $0\leq j\leq n-1$ and $GG^T=k I_{s+1}$. Also,
	\begin{align*}
		\varphi(c)\cdot \varphi(d)=\varphi(c) \varphi(d)^T&=\sum_{j=0}^{n-1}\alpha_jGG^T\beta_j^T\\
		&=k\sum_{j=0}^{n-1}\alpha_j\beta_j^T\\
		&=k\sum_{j=0}^{n-1}(c^0_jd^0_j+c^1_jd^1_j+\dots+c^s_jd^s_j)=0.
	\end{align*}
	Since $c\in \mathcal{C}$ and $d\in \mathcal{C}^{\perp}$ are arbitrary, $\varphi(\mathcal{C}^{\perp})\subseteq (\varphi(\mathcal{C}))^{\perp}$. On the other hand, as $\varphi$ is a bijective linear map, $\mid \varphi(\mathcal{C}^{\perp})\mid= \mid (\varphi(\mathcal{C}))^{\perp}\mid$. Therefore, $\varphi(\mathcal{C}^{\perp})=(\varphi(\mathcal{C}))^{\perp}$.
\end{proof}
\subsection{CSS Code Framework}\label{sec:CSSCode}

Calderbank, Shor, and Steane \cite{Calderbank1} \cite{Steane} proposed a framework to construct quantum error correction codes over qubits from two classical binary codes $C_1$ and $C_2$ that satisfy $C_1^{\perp} \subset C_2$. This class of codes are called the \textit{Calderbank-Shor-Steane (CSS) codes}. The condition $C_1^{\perp} \subset C_2$ is called the \textit{dual-containing condition} of CSS codes.
By considering the two codes $C_1$ and $C_2$ to be the same code, i.e., $C_1=C_2$, we can construct quantum codes from dual-containing classical codes as $C_1^{\perp} \subset C_2 = C_1$.

The CSS codes form a class of stabilizer codes. Let $H_1$ and $H_2$ be the parity check matrices of the classical codes $C_1[n,k_1,d_1]$ and $C_2[n,k_2,d_2]$, respectively. As $C_1^{\perp} \subset C_2$, the elements of $C_1^{\perp}$ are codewords of $C_2$; hence, $H_2H_1^{\mathrm{T}} = 0$.

The CSS code is defined in the following two equivalent ways:
\begin{itemize}
\item[1)] The coset-based definition:
As $C_1^{\perp}\subset C_2$, cosets of $C_1^{\perp}$ are formed in $C_2$. The basis codewords of the CSS code $\mathcal{Q}_{\mathrm{CSS}}$ are the normalized superposition of all the elements in a particular coset of $C_1^{\perp}$ in $C_2$.
 As $C_1^{\perp}$ has $2^{(n-k_1)}$ elements and $C_2$ has $2^{k_2}$ elements, we obtain $C_2$ to contain $(2^{k_2})/(2^{(n-k_1)}) = 2^{(k_1+k_2-n)}$ cosets of $C_1^{\perp}$. As each coset corresponds to a basis codeword, $\mathcal{Q}_{\mathrm{CSS}}$ has a dimension of $2^{(k_1+k_2-n)}$.\vspace{0.1cm}

 Let $\omega_0$, $\omega_1$, $\dots$, $\omega_{g-1}$ be the cosets of $C_1^{\perp}$ in $C_2$, where $g=2^{(k_1+k_2-n)}$. Let $w_0$, $\dots$, $w_{g-1}$ be the coset representatives of the cosets $\omega_0$, $\dots$, $\omega_{g-1}$. The basis codeword of the CSS code corresponding to the coset $\omega_i$ ($i\in\{0,1,\dots,g-1\}$) is
 \begin{align}
\ket{\psi_i}=\frac{1}{2^{((n-k_1)/2)}}\sum_{l\in \mathcal{C}_1^\perp}\ket{l+w_i}. \label{eqn:CSS_Codeword}
 \end{align}
The basis states in the superposition help to detect/correct the bit flip errors while the superposition helps to detect/correct the phase flip errors.
 \item[2)] The check matrix based definition: The check matrix of the CSS code \cite{Nielsen} is
 \begin{align}
 \mathcal{H}_{\mathrm{CSS}} = \left[\begin{array}{c|c}
H_1 & \mathbf{0}\\
\mathbf{0} & H_2
\end{array}\right]. \label{eqn:H_CSS_b}
\end{align}
\end{itemize}
 The quantum codes obtained from both these definitions are the same for qubits.

Let $\rho_1 = (n-k_1)$ and $\rho_2=(n-k_2)$. From equation \eqref{eqn:H_CSS_b}, the first $\rho_1$ stabilizer generators that correspond to $[H_1 | 0]$ operate only the bit flip operator on a few qubits. They do not operate phase flip operators. As the bit flip and phase flip operators do not commute with each other, these stabilizers are used to detect and correct the phase flip errors. Similarly, the stabilizers that correspond to $[0|H_2]$ detect and correct the bit flip errors.

 As the stabilizer code \cite{Gottesman} correct bit flip errors and phase flip errors based on the stabilizers in $[0|H_2]$ and $[H_1 | 0]$, their bit flip and phase flip error correction capabilities are based on the error correction capabilities of $H_2$ and $H_1$, respectively. The minimum distance of the code is obtained to be $d' \geq \mathrm{min}(d_1,d_2)$ \cite{Nielsen}.

Suppose that the parity check matrices $H_1$ and $H_2$ are full rank matrices. The check matrix in equation \eqref{eqn:H_CSS_b} is a $((\rho_1+\rho_2)\times 2n)$ matrix. As $H_1$ and $H_2$ are full rank matrices, the CSS code has $(\rho_1 + \rho_2)$ minimal stabilizer generators. Thus, the size of the CSS code is $2^{(n-(\rho_1+\rho_2))} = 2^{(k_1+k_2-n)}$. Hence, the CSS code is an $[[n,k_1+k_2-n, d'\geq \mathrm{min}(d_1,d_2)]]$ stabilizer code.

Next, we discuss the CSS code over qudits that is obtained from the classical codes $D_1$ and $D_2$ by using two different  approaches for obtaining the basis codewords.
\begin{enumerate}
    \item \textbf{Coset-based construction of the CSS code:}
As $D_1^{\perp}$ is a subset of $D_2$, there exist cosets of $D_1^{\perp}$ in $D_2$. The size of $D_1^{\perp}$ and $D_2$ are $p^{m(n-k_1)}$ and $p^{mk_2}$, respectively; hence, the number of cosets of $D_1^{\perp}$ in $D_2$ is $s'=(p^{mk_2}/p^{m(n-k_1)}) = p^{m(k_2-n+k_1)} = p^{m(k_1+k_2-n)}$. Thus, the dimension of the quantum code obtained is $p^{m(k_1+k_2-n)}$, similar to the CSS code over qubits whose dimension is $2^{(k_1+k_2-n)}$.

Let $\tau_0$, $\tau_1$, $\dots$, $\tau_{(s'-1)}$ be the $s'$ cosets of $D_1^{\perp}$ in $D_2$. Let ${t_0}$, ${t_1}$, $\dots$, ${t_{(s'-1)}}$ be the coset representatives of $\tau_0$, $\tau_1$, $\dots$, $\tau_{(s'-1)}$, respectively. The basis codeword $\ket{\psi_i^{(p^m)}}$ ($i\in\{0,1,\dots,s'-1\}$) of the CSS code over qudits obtained from the coset $\tau_i$ is
 \begin{align}
\ket{\psi_i^{(p^m)}}=\frac{1}{p^{m((n-k_1)/2)}}\sum_{l\in \mathcal{D}_1^\perp}\ket{l+t_i}. \label{eqn:CSS_Codeword1}
 \end{align}
\item \textbf{Parity check matrix of the CSS code (\cite{Nadkarni2021}) :}
 The check matrix for the CSS code obtained from $D_1$ and $D_2$ that satisfy $D_1^{\perp}\subset
 D_2$, whose basis codewords are provided in Equation \ref{eqn:CSS_Codeword1}, is given by,\vspace{-0.15cm}
 \begin{align}
\mathcal{H}_{\mathrm{CSS}}^{(p^{m})} =  \left[\begin{array}{c|c}
\begin{matrix}
H_{d_1}\\\alpha H_{d_1} \\ \vdots \\ \alpha^{m-1}H_{d_1}
\end{matrix} & \mathbf{0}\\
\mathbf{0} & \begin{matrix}
H_{d_2}\\\alpha H_{d_2} \\ \vdots \\ \alpha^{m-1}H_{d_2}
\end{matrix}
\end{array}\right],
 \end{align}
 where $\alpha$ is the primitive element of $\mathbb{F}_{p^{m}}$.
\end{enumerate}
Now, keeping the above discussion in mind, we derive a necessary and sufficient condition for dual-containment. Currently, CSS construction (Lemma \ref{lemma css}) is one of the widely used techniques to obtain quantum codes from classical linear codes, in which dual containing linear codes play an instrumental role.
\begin{lemma}[\cite{Grassl04}, Theorem 3]\label{lemma css}
	Let $\mathcal{C}$ be an $[n,k,d]$ linear code over $\mathbb{F}_q$ such that $\mathcal{C}^{\perp}\subseteq \mathcal{C}$. Then there exists a quantum code with parameters $[[n,2k-n,d]]_q$.
\end{lemma}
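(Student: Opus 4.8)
The plan is to specialize the general CSS framework of Section~\ref{sec:CSSCode} to the case $C_1=C_2=\mathcal{C}$, so that the dual-containing hypothesis $\mathcal{C}^{\perp}\subseteq\mathcal{C}$ becomes exactly the required condition $C_1^{\perp}\subseteq C_2$. With this identification the construction of Equation~\eqref{eqn:CSS_Codeword1} directly produces a quantum code whose basis states are indexed by the cosets of $\mathcal{C}^{\perp}$ in $\mathcal{C}$, and the remaining work is to read off the three parameters $n$, $2k-n$ and $d$.

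First I would settle the dimension. Since $\mathcal{C}$ is an $[n,k]$ code over $\mathbb{F}_q=\mathbb{F}_{p^m}$, we have $|\mathcal{C}|=q^{k}=p^{mk}$ and $|\mathcal{C}^{\perp}|=q^{n-k}=p^{m(n-k)}$. Taking $k_1=k_2=k$ in the coset count of Section~\ref{sec:CSSCode}, the number of cosets of $\mathcal{C}^{\perp}$ in $\mathcal{C}$ equals $p^{m(k_1+k_2-n)}=p^{m(2k-n)}=q^{2k-n}$. As each coset yields one orthonormal basis codeword $\ket{\psi_i^{(p^m)}}$, the code space has dimension $q^{2k-n}$; that is, it encodes $2k-n$ qudits, which accounts for the length $n$ and the middle parameter $2k-n$.

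Next I would verify that the construction genuinely defines a stabilizer (hence a legitimate quantum) code, which is where the hypothesis is used. Using the check-matrix description $\mathcal{H}_{\mathrm{CSS}}^{(p^m)}$, the $X$-type generators are built from the rows of a parity-check matrix $H$ of $\mathcal{C}$ and the $Z$-type generators from the same $H$. The symplectic commutation condition between the $X$-block and the $Z$-block reduces to $HH^{\mathrm{T}}=0$; because the rows of $H$ span $\mathcal{C}^{\perp}$, this identity is equivalent to $\mathcal{C}^{\perp}\subseteq(\mathcal{C}^{\perp})^{\perp}=\mathcal{C}$, which is precisely our hypothesis. Thus the stabilizer group is abelian and the construction is well defined, and with $H$ of full rank $n-k$ the $2(n-k)$ independent generators again yield a code of dimension $q^{\,n-2(n-k)}=q^{2k-n}$, consistent with the coset count.

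Finally, for the distance I would argue that every error of weight less than $d$ is either in the stabilizer or is detected. An undetectable nontrivial logical operator corresponds, in both its $X$ and $Z$ parts, to a vector in $\mathcal{C}\setminus\mathcal{C}^{\perp}$; since $\mathcal{C}^{\perp}\subseteq\mathcal{C}$, the minimum weight over $\mathcal{C}\setminus\mathcal{C}^{\perp}$ is at least the minimum weight $d$ of $\mathcal{C}$, so the quantum minimum distance is at least $d$. The main obstacle is exactly this last step: making the correspondence between undetectable errors and coset representatives precise, and confirming that discarding the low-weight words of $\mathcal{C}^{\perp}$ leaves the bound $d$ intact. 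The dimension and commutation parts are essentially bookkeeping, whereas the distance claim requires the careful weight analysis underlying Theorem~3 of \cite{Grassl04}.
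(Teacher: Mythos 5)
Your proposal is correct and takes essentially the same route the paper itself points to: the paper does not prove this lemma (it imports it as Theorem~3 of \cite{Grassl04}), and its CSS framework section (Section~\ref{sec:CSSCode}) already describes precisely the specialization $C_1=C_2=\mathcal{C}$, the coset count giving dimension $q^{2k-n}$, and the resulting distance bound. The details you fill in---the commutation condition $HH^{\mathrm{T}}=0$ being equivalent to $\mathcal{C}^{\perp}\subseteq\mathcal{C}$, and the observation that every vector of $\mathcal{C}\setminus\mathcal{C}^{\perp}$ is a nonzero codeword of $\mathcal{C}$ and hence has weight at least $d$---are exactly the standard completion of that cited argument, and they are sound.
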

\begin{theorem}\label{thm quatum}
	Let  $\mathcal{C}=\displaystyle\bigoplus_{i=0}^{s}\zeta_i\mathcal{C}_i$ be a $(\gamma,\Delta)$-cyclic code of length $n$ over $\mathscr{R}_{q,s}$.  Also, let $\mathcal{C}_i=\langle g_i(x) \rangle$ be a $(\theta,\Im)$-cyclic code over $\mathbb{F}_q$ where $x^n-1=h_i(x)g_i(x)=g_i(x)h'_i(x)$ for some monic skew polynomials $g_i(x),h_i(x),h'_i(x)\in\mathbb{F}_q[x;\theta,\Im],$ for $i=0,1,\dots,s$. Further, let $h_i'(x)h_i'(x)$ be divisible by $x^n-1$ from the right for  $i=0,1,\dots,s$. Then there exists a quantum code with parameters $[[(s+1)n,2k-(s+1)n,d_H]]_q$.
\end{theorem}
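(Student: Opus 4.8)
The plan is to reduce the statement to the CSS construction of Lemma \ref{lemma css} applied to the Gray image $\varphi(\mathcal{C})$, so that the whole argument becomes a short chain of previously established facts. First I would invoke Theorem \ref{thm dual contain}: its hypothesis is precisely that $h_i'(x)h_i'(x)$ be right-divisible by $x^n-1$ for each $i=0,1,\dots,s$, which is exactly what is assumed here, and its conclusion is that $\mathcal{C}^\perp\subseteq\mathcal{C}$ over $\mathscr{R}_{q,s}$. This is the only step where the polynomial hypotheses are genuinely used; everything after it is a transport of this containment through structure-preserving maps.

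Next I would push the containment down to $\mathbb{F}_q$ via the Gray map $\varphi$. By Lemma \ref{lemdual}, $\varphi$ carries duals to duals, that is $\varphi(\mathcal{C}^\perp)=\varphi(\mathcal{C})^\perp$. Applying $\varphi$ to $\mathcal{C}^\perp\subseteq\mathcal{C}$ therefore yields $\varphi(\mathcal{C})^\perp=\varphi(\mathcal{C}^\perp)\subseteq\varphi(\mathcal{C})$, so the image is a Euclidean dual-containing linear code over $\mathbb{F}_q$.

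I would then read off the parameters of $\varphi(\mathcal{C})$. Since $\varphi$ is $\mathbb{F}_q$-linear, bijective and distance-preserving by Proposition \ref{prop1}, the image of the $[n,k,d_G]$ code $\mathcal{C}$ is an $[(s+1)n,k,d_H]$ linear code over $\mathbb{F}_q$, where $k$ is the $\mathbb{F}_q$-dimension of $\mathcal{C}$ and $d_H=d_G$. Feeding this dual-containing $[(s+1)n,k,d_H]$ code into Lemma \ref{lemma css} (with $N=(s+1)n$, $K=k$, $D=d_H$) then produces a quantum code with parameters $[[(s+1)n,\,2k-(s+1)n,\,d_H]]_q$, which is the assertion.

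Structurally there is no deep obstacle here; the proof is essentially assembling four earlier results in the correct order. The only points demanding care are bookkeeping ones: confirming that the dual-containment truly survives the Gray map (which hinges on the orthogonality-preservation of Lemma \ref{lemdual}, itself relying on $GG^T=kI_{s+1}$), and keeping the dimension $k$ of $\varphi(\mathcal{C})$ distinct from the scalar $k\in\mathbb{F}_q^*$ appearing in $GG^T=kI_{s+1}$, a symbol clash in the surrounding text that is harmless but easy to conflate.
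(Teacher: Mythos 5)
Your proposal is correct and follows exactly the paper's own argument: Theorem \ref{thm dual contain} gives $\mathcal{C}^{\perp}\subseteq\mathcal{C}$, Lemma \ref{lemdual} transports this through the Gray map to get $\varphi(\mathcal{C})^{\perp}\subseteq\varphi(\mathcal{C})$, the parameters $[(s+1)n,k,d_H]$ come from Proposition \ref{prop1}, and Lemma \ref{lemma css} yields the quantum code $[[(s+1)n,2k-(s+1)n,d_H]]_q$. Your added remark distinguishing the dimension $k$ from the scalar $k$ in $GG^T=kI_{s+1}$ is a sensible clarification, but otherwise the two proofs coincide step for step.
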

\begin{proof}
	Let $h_i'(x)h_i'(x)$ be divisible by $x^n-1$ from right for $i=0,1,\dots,s$. Then from Theorem \ref{thm dual contain}, we have $\mathcal{C}^{\perp}\subseteq \mathcal{C}$. Also, by Lemma \ref{lemdual},
	we have $\varphi(\mathcal{C}^{\perp})=\varphi(\mathcal{C})^{\perp}$, and hence $\varphi(\mathcal{C})^{\perp}\subseteq \varphi(\mathcal{C})$. Thus, $\varphi(\mathcal{C})$ is a dual containing linear code with parameters $[(s+1)n,k,d_H]$ over $\mathbb{F}_q$. Further, by Lemma \ref{lemma css}, there exists a quantum code with parameters $[[(s+1)n,2k-(s+1)n,d_H]]_q$.
\end{proof}
Next, with the help of our established results, we construct many new quantum codes possessing better parameters than the existing codes, which are appeared in \cite{Edel,Verma21}. In the following examples, $\mathbb{F}_q^*=\langle w \rangle $  denotes the cyclic group of non-zero elements of $\mathbb{F}_q$ generated by $w\in \mathbb{F}_q$. All examples' computations are carried out using the Magma
computation system \cite{Bosma}.
\begin{example}
	Let $q=8$, $s=3$ and $\mathscr{R}_{8,3}=\mathbb{F}_8[v_1,v_2,v_3]/\langle v_1^2-v_1, v_2^2-v_2,v_3^2-v_3,v_1v_2=v_2v_1 =v_2v_3 = v_3v_2 = v_3v_1 =v_1v_3 = 0\rangle$, where $\mathbb{F}_8=\mathbb{F}_2(w)$ and $w^3+w+1=0$. Let $n=30$,  $\theta:\mathbb{F}_8\longrightarrow \mathbb{F}_8$ be the Frobenius automorphism defined by $\theta(a)=a^2$, and the $\theta$-derivation $\Im:\mathbb{F}_8\longrightarrow \mathbb{F}_8$ is defined by $\Im(a)=w(\theta(a)-a)$ for all $a\in \mathbb{F}_8$. Therefore, $\mathbb{F}_8[x;\theta,\Im]$ is a skew polynomial ring. In $\mathbb{F}_8[x;\theta,\Im],$ we have
	\allowdisplaybreaks
	\begin{align*}
		x^{30}-1=&(w^6x^{29} + w^4x^{28} + w^6x^{27} + w^4x^{26} + w^3x^{25} + x^{24} + w^6x^{23} +
		w^4x^{22} + w^6x^{21}\\& + w^4x^{20}+ w^6x^{19} + w^4x^{18} + w^6x^{17} + w^4x^{16} +
		w^6x^{15} + w^4x^{14} + w^3x^{13} \\&+ x^{12} + w^6x^{11} + w^4x^{10} + w^3x^9 + x^8 +
		w^6x^7 + w^4x^6 + w^3x^5 + x^4 + w^6x^3\\& + w^4x^2+ w^3x + 1)(w^2x + 1)=h_0(x)g_0(x)\\
		=&(w^2x + 1)(w^6x^{29} + w^4x^{28} + w^6x^{27} + w^4x^{26} + w^6x^{25} + w^4x^{24} + w^6x^{23} \\&+
		w^4x^{22} + w^6x^{21} + w^4x^{20} + w^6x^{19} + w^4x^{18} + w^6x^{17} + w^4x^{16} +
		w^6x^{15} \\&+ w^4x^{14} + w^6x^{13} + w^4x^{12} + w^6x^{11} + w^4x^{10} + w^6x^9 +
		w^4x^8 + w^6x^7 \\&+ w^4x^6 + w^6x^5 + w^4x^4 + w^6x^3 + w^4x^2 + w^6x +
		w^4)=g_0(x)h_0'(x)\\
		x^{30}-1=&(w^5x^{28} + w^3x^{27} + w^2x^{26} + w^3x^{25} + w^3x^{24} + w^4x^{23} + w^6x^{22} +
		w^5x^{21} \\&+ w^6x^{20} + wx^{19} + w^3x^{18} + x^{17} + w^6x^{16} + w^3x^{15} + w^6x^{14}
		+ w^6x^{13} + w^4x^{12} \\&+ w^2x^{11} + w^6x^9 + w^5x^8 + w^6x^6 + w^4x^5 +
		w^5x^4 + w^4x^2 + w^2x + 1)(wx^2 \\&+ w^4x + w^6)=h_1(x)g_1(x)\\
		=&(wx^2 + w^4x + w^6)(w^5x^{28} + w^3x^{27} + w^2x^{26} + w^3x^{25} + x^{24} + w^3x^{22} + w^6x^{21} \\&+
		w^4x^{20} + x^{19} + w^5x^{18} + x^{16} + w^6x^{15} + w^5x^{13} + w^3x^{12} + w^2x^{11} +
		w^3x^{10} \\&+ x^9 + w^3x^7 + w^6x^6 + w^4x^5 + x^4 + w^5x^3 + x + w^6)=g_1(x)h_1'(x)\\
		x^{30}-1=&(w^6x^{28} + w^6x^{27} + wx^{26} + wx^{24} + x^{23} + w^5x^{22} + x^{20} + w^6x^{19} +
		w^5x^{17} \\&+ w^3x^{16} + w^2x^{15} + w^3x^{14} + x^{13} + w^4x^{12} + w^6x^9 + wx^8 +
		w^6x^7 + wx^6\\& + wx^5 + w^2x^4 + wx^3 + w^4x^2 + w^2x)(w^4x^2 + w^3x + w)=h_2(x)g_2(x)\\
		=&(w^4x^2 + w^3x + w)(w^6x^{28} + w^6x^{27} + wx^{26} + w^2x^{24} + wx^{23} + wx^{22} + w^3x^{21}\\& + w^2x^{20}
		+ w^2x^{19} + w^5x^{18} + w^5x^{17} + x^{16} + w^2x^{15} + w^6x^{13} + w^6x^{12} +
		wx^{11} \\&+ w^2x^9 + wx^8 + wx^7 + w^3x^6 + w^2x^5 + w^2x^4 + w^5x^3 +w^5x^2 + x + w^2)\\&=g_2(x)h_2'(x)\\
		x^{30}-1=&(x^{28} + w^4x^{27} + w^3x^{26} + w^3x^{25} + wx^{24} + w^4x^{23} + x^{22} + w^4x^{20} +
		x^{18} + x^{17} \\&+ w^5x^{16} + w^5x^{15} + w^4x^{14} + w^5x^{13} + w^6x^{11} + w^5x^{10} +
		wx^9 + w^4x^8 + x^7\\& + wx^6 + wx^5 + w^3x^3 + w^6x^2 + w^6x + 1)(x^2 + w^2x + w^4)=h_3(x)g_3(x)\\
		=&(x^2 + w^2x + w^4)(x^{28} + w^4x^{27} + w^3x^{26} + w^3x^{25} + wx^{24} + w^2x^{23} + w^5x^{22}\\& + w^2x^{21}
		+ w^4x^{20} + wx^{19} + w^6x^{18} + x^{17} + w^5x^{16} + w^6x^{15} + x^{13} + w^4x^{12} \\&+
		w^3x^{11} + w^3x^{10} + wx^9 + w^2x^8 + w^5x^7 + w^2x^6 + w^4x^5 + wx^4 +
		w^6x^3\\& + x^2 + w^5x + w^6)=g_3(x)h_3'(x)
	\end{align*}
	Now, let $g_0=w^2x + 1,  g_1=wx^2 + w^4x + w^6$, $g_2=w^4x^2 + w^3x + w$ and $g_3=x^2 + w^2x + w^4$. Then $\mathcal{C}_i=\langle g_i(x)\rangle$ is a $(\theta,\Im)$-cyclic code of length $30$ over $\mathbb{F}_8$ for $i=0,1,2,3$. Then by Theorem \ref{th31}, $\mathcal{C}=\displaystyle\bigoplus_{i=0}^{s}\zeta_i\mathcal{C}_i$ is a $(\gamma,\Delta)$-cyclic code of length $30$ over $\mathscr{R}_{8,3}$.
	Let 	
	\begin{equation*}\label{eq31}
		G=
		\begin{pmatrix}
			1&w&w^3&1\\
			w&1&1&w^3\\
			w^3&1&1&w\\
			1&w^3&w&1\\
		\end{pmatrix}\in GL_4(\mathbb{F}_8)
	\end{equation*}
	such that $GG^T=I_4$. Then $\varphi(\mathcal{C})$ is a $[120,114,4]$ linear code over $\mathbb{F}_8$. Again,
	\begin{align*}
		h_0'(x)h_0'(x)=&(w^2x^{28} + x^{27} + x^{26} + w^5x^{25} + w^4x^{24} + w^2x^{22} + x^{21} + x^{20} + w^5x^{19}\\&
		+ w^2x^{18} + x^{17}+ x^{16} + w^5x^{15} + w^2x^{14} + x^{13} + w^3x^{12} + w^4x^{11} +
		x^{10 }\\&+ w^5x^9 + w^4x^8 + w^2x^6 + x^5 + w^3x^4 + w^4x^3 + x^2 + w^5x + w^4)(x^{30}-1)\\
		h_1'(x)h_1'(x)=&(wx^{26} + w^4x^{25} + w^2x^{24} + wx^{23} + wx^{22} + w^3x^{21} + w^5x^{19} + w^4x^{16}\\&
		+ x^{15}+ w^2x^{14} + w^3x^{13} + wx^{12} + w^4x^{11} + w^6x^{10} + w^6x^9 + w^2x^8
		+ wx^7 \\&+ w^4x^6 + x^4 + x^2 + w^6x + w^6)(x^{30}-1)\\
		h_2'(x)h_2'(x)=&(w^4x^{26} + w^3x^{25 }+ w^3x^{24} + w^4x^{23} + x^{22} + w^3x^{21} + w^4x^{20} +
		w^3x^{19} \\&+ w^2x^{18}+ wx^{17} + wx^{16} + w^5x^{15} + x^{14} + w^5x^{13} + w^5x^{12} +
		w^2x^{11} \\&+ w^6x^9 + w^6x^8 + w^2x^7 + w^5x^6 + x^5 + w^4x^4 + w^6x^3 +
		w^5x^2 + w^5x)\\&(x^{30}-1)\\
		h_3'(x)h_3'(x)=&(x^{26} + w^2x^{25} + w^5x^{24} + x^{23} + w^2x^{22} + wx^{20} + x^{19} + w^3x^{18} +
		w^6x^{17} \\&+ wx^{16} + x^{15} + wx^{14} + w^6x^{12} + w^6x^{11} + w^5x^{10} + x^9 +
		w^3x^8 + x^7 \\&+ x^6 + w^6x^5 + w^6x^3 + w^4x^2 + w^6x + w^6)(x^{30}-1).
	\end{align*}
	From above, we see that $h_i'(x)h_i'(x)$ is divisible by $(x^{30}-1)$ on the right for $i=0,1,2,3$. Hence, by Theorem \ref{thm quatum}, there exists a quantum code with parameters $[[120,108,4]]_8$ which has the same length and distance but better code rate than the best-known code $[[120,104,4]]_8$ given by \cite{Edel}.
\end{example}

\begin{example}
	Let $q=25$, $s=3$ and $\mathscr{R}_{25,3}=\mathbb{F}_{25}[v_1,v_2,v_3]/\langle v_1^2-v_1, v_2^2-v_2,v_3^2-v_3,v_1v_2=v_2v_1 =v_2v_3 = v_3v_2 = v_3v_1 =v_1v_3 = 0\rangle$. Let $n=30$,  $\theta:\mathbb{F}_{25}\longrightarrow \mathbb{F}_{25}$ be the Frobenius automorphism defined by $\theta(a)=a^5$, and the $\theta$-derivation $\Im:\mathbb{F}_{25}\longrightarrow \mathbb{F}_{25}$ is defined by $\Im(a)=w(\theta(a)-a)$ for all $a\in \mathbb{F}_{25}$. Therefore, $\mathbb{F}_{25}[x;\theta,\Im]$ is a skew polynomial ring. In $\mathbb{F}_{25}[x;\theta,\Im],$ we have
	\allowdisplaybreaks
	\begin{align*}
		x^{20}-1=&(w^{19}x^{19} + x^{18} + w^{20}x^{17} + w^4x^{16} + w^{15}x^{15} + w^{20}x^{14} + x^{13} +
		w^{19}x^{12} + w^7x^{11} \\&+ w^2x^{10} + w^{10}x^9 + 3x^8 + w^3x^7 + w^{17}x^6 +w^{11}x^5 + 2x^4 + 3x^2 + 4x + 3)(wx \\&+ w^{17})=h_0(x)g_0(x)\\
		=&(wx + w^{17})(w^{19}x^{19} + x^{18} + w^{20}x^{17} + w^4x^{16} + w^{15}x^{15} + wx^{14} + 2x^{13} + w^2x^{12}\\&
		+ w^{10}x^{11} + w^{21}x^{10} + w^7x^9 + 4x^8 + w^8x^7 + w^{16}x^6 + w^3x^5 +
		w^{13}x^4 + 3x^3\\& + w^{14}x^2+ w^{22}x + w^9)=g_0(x)h_0'(x)\\
		x^{20}-1=&(w^{14}x^{18} + w^8x^{17} + w^{17}x^{16} + 3x^{15} + 2x^{14} + w^{21}x^{13} + w^8x^{12} +
		w^{10}x^{11} + wx^{10}\\& + 4x^9 + w^{19}x^7 + w^{19}x^6 + w^9x^5 + 2x^4 + 4x^3 + x +2)(w^{10}x^2+ 2x+w^{11})\\&
		=h_1(x)g_1(x)\\
		=&(w^{10}x^2+ 2x+w^{11})(w^{14}x^{18 }+ w^8x^{17} + w^{17}x^{16} + 3x^{15} + 2x^{14} + w^{15}x^{13} \\&+ w^3x^{12} + x^{11}
		+ w^{15}x^{10} + 3x^9 + w^4x^8 + 4x^7 + w^4x^6 + w^{10}x^5 + x^4 + 2x^3 \\&+w^{11}x^2 + w^{19}x+ w^{19}) =g_1(x)h_1'(x)\\
		x^{20}-1=&(w^{23}x^{19} + w^{19}x^{18} + w^3x^{17} + w^{14}x^{16} + x^{15} + w^4x^{14} + w^{19}x^{13} +
		w^{10}x^{12} \\&+ w^{13}x^{11} + 2x^{10} + w^2x^9 + w^{13}x^8 + w^7x^7 + w^{21}x^6 + 4x^5
		+ 2x^4 + 3x^2 + 4x \\&+ 3)(w^5x + 3)=h_2(x)g_2(x)\\
		=&(w^5x + 3)(w^{23}x^{19 }+ w^{19}x^{18} + w^3x^{17} + w^{14}x^{16} + x^{15} + w^5x^{14} + wx^{13} +\\&
		w^9x^{12} + w^{20}x^{11} + 2x^{10} + w^{11}x^9 + w^7x^8 + w^{15}x^7 + w^2x^6 + 4x^5
		+ w^{17}x^4 \\&+ w^{13}x^3 + w^{21}x^2 + w^8x + 3)=g_2(x)h_2'(x)\\
		x^{20}-1=&(w^{10}x^{18} + 4x^{17} + w^{11}x^{16} + 4x^{15} + w^{16}x^{14} + w^7x^{13} + 3x^{12} +
		w^8x^{11} \\&+ w^{21}x^{10} + w^{13}x^9 + 3x^8 + 2x^7 + w^{14}x^6 + w^2x^5 + 4x^4 +4x^3 + x^2 + 4x\\& + 2)(w^{14}x^2 + w^{19}x + w^{15})=h_3(x)g_3(x)\\
		=&(w^{14}x^2 + w^{19}x + w^{15})(w^{10}x^{18} + 4x^{17} + w^{11}x^{16} + 4x^{15} + w^{16}x^{14 }+ w^2x^{13}\\& + x^{12} + 3x^{11} +
		w^4x^{10} + w^4x^9 + w^7x^8 + 2x^7 + w^{22}x^6 + w^9x^5 + w^{10}x^4 \\&+ x^3 +w^{13}x + w^2
		)=g_3(x)h_3'(x)
	\end{align*}
	Now, let $g_0(x)=wx + w^{17},  g_1(x)=w^{10}x^2 + 2x + w^{11}$, $g_2(x)= w^5x + 3$ and $g_3(x)=w^{14}x^2 + w^{19}x + w^{15}$. Then $\mathcal{C}_i=\langle g_i(x)\rangle$ is a $(\theta,\Im)$-cyclic code of length $20$ over $\mathbb{F}_{25}$ for $i=0,1,2,3$. Then by Theorem \ref{th31}, $\mathcal{C}=\displaystyle\bigoplus_{i=0}^{s}\zeta_i\mathcal{C}_i$ is a $(\gamma,\Delta)$-cyclic code of length $20$ over $\mathscr{R}_{{25},3}$.
	Let 	
	\begin{equation*}\label{eq31}
		G=
		\begin{pmatrix}
			-1&1&1&1\\
			1&1&1&-1\\
			1&-1&1&1\\
			1&1&-1&1\\
		\end{pmatrix}\in GL_4(\mathbb{F}_{25})
	\end{equation*}
	such that $GG^T=4I_2$. Then $\varphi(\mathcal{C})$ is a $[80,74,4]$ linear code over $\mathbb{F}_{25}$. Again,
	\begin{align*}
		h_0'(x)h_0'(x)=&(3x^{18} + w^4x^{17} + w^{19}x^{16} + w^2x^{15} + w^{23}x^{13} + w^8x^{12} + w^9x^{11} +
		w^{11}x^{10}\\& + w^{21}x^8 + wx^7 + w^5x^6 + 3x^5 + wx^3 + 2x^2 + 3x + w^{15})(x^{20}-1)\\
		h_1'(x)h_1'(x)=&(w^4x^{16} + w^{13}x^{15} + w^{14}x^{14} + w^2x^{13} + w^{17}x^{12} + w^9x^{11} + w^{20}x^{10} +
		w^7x^9 \\&+ x^8 + w^{14}x^7 + w^{22}x^6 + w^{21}x^5 + w^8x^4 + 4x^3 + w^{22}x^2 +wx + w^{13})\\&(x^{20}-1)\\
		h_2'(x)h_2'(x)=&(3x^{18} + w^{20}x^{17} + w^{10}x^{16} + 2x^{15} + w^{19}x^{13} + 4x^{12} + w^{21}x^{11} +
		4x^{10}\\& + w^9x^8 + w^{15}x^7 + 2x^6 + 3x^5 + w^5x^3 + wx^2 + w^{13}x + 1)(x^{20}-1)\\
		h_3'(x)h_3'(x)=&(w^{20}x^{16} + w^{23}x^{15} + 2x^{14} + w^{14}x^{13} + w^{21}x^{12} + w^{13}x^{11} + w^{16}x^{10} \\&+w^{11}x^9 + w^5x^8 + w^{11}x^7 + w^{11}x^6 + w^{17}x^5 + w^8x^4 + w^8x^3 +
		w^{11}x^2 \\&+ w^{16}x + w^{20})(x^{20}-1).
	\end{align*}
	From above we see that $h_i'(x)h_i'(x)$ is divisible by $x^{20}-1$ on the right for $i=0,1,2,3$. Hence, by Theorem \ref{thm quatum}, there exists a quantum code with parameters $[[80,68,4]]_{25}$ which has the same length and distance, but better code rate than the best-known code $[[80,64,4]]_{25}$ given by \cite{Verma21}.
\end{example}

Let $\mathcal{C}$ be a $(\theta,\Im)$-cyclic code of length $n$ over $\mathbb{F}_q$ where $\mathcal{C}=\langle g(x)\rangle$ and $x^n-1=h(x)g(x)=g(x)h'(x)$ for some monic skew polynomials $g(x),h(x),h'(x)\in\mathbb{F}_q[x;\theta,\Im]$. Further, let $h'(x)h'(x)$ be divisible by $x^n-1$ from the right. Therefore, by Theorem \ref{thm dual contain}, we get the dual containing codes with $\mathbb{F}_q$-parameters $[n,k,d]_q$ (enlisted in the fourth column of Table \ref{tab2}).
\begin{landscape}
	\begin{table}
		\renewcommand{\arraystretch}{1.8}
		\begin{center}
			\caption{New quantum codes from $(\gamma,\Delta)$-cyclic codes over $\mathscr{R}_{q,s}$}	
			\begin{tabular}{|c|c|c|c|c|c|c|c|c|c|c|c|c|c|c|c|}
				
				\hline
				$s$ & $(n,q)$ &  $\Im(a)$, $a\in \mathbb{F}_q$&$[g_0(x),g_1(x),\dots,g_s(x)]$ & $\varphi(\mathcal{C})$ & Obtained  & Existing  \\
				
				&   &  &  &   & Codes  & Codes \\
				
				\hline

				$2$ & $(48,9)$ &$w^2(\theta(a)-a)$& $(w^71w^3,w^5w^2,w^512)$   & $[144,139,3]_9$ & $[[144,134,3]]_9$ & $[[146,134,3]]_9$ \cite{Edel} \\
				
				$3$ & $(36,9)$ & $w^2(\theta(a)-a)$&$(w^21w^5,1w^3, w^7w^2,2ww^3)$   & $[144,138,4]_9$ & $[[144,132,4]]_9$ & $[[146,128,4]]_9$ \cite{Edel} \\
				
				$3$ & $(32,9)$ & $w^2(\theta(a)-a)$&$(w^2ww, w^6w^22, 2w^2ww^3,w^31)$   & $[128,120,4]_9$ & $[[128,112,4]]_9$ & $[[129,103,4]]_9$ \cite{Edel} \\
				
				$2$ & $(42,9)$ & $w^2(\theta(a)-a)$&$(ww^6, w^6ww^3,w^7w^2)$   & $[126,122,3]_{9}$ & $[[126,118,3]]_{9}$ & $[[130,118,3]]_{9}$ \cite{Verma21} \\
				
				$2$ & $(60,4)$ & $w(\theta(a)-a)$&$(www^2,1w^2w^2, 11w^2)$   & $[180,174,3]_{4}$ & $[[180,168,3]]_{4}$ & $[[185,167,3]]_{4}$ \cite{Verma21} \\
				
				$3$ & $(20,25)$ & $w(\theta(a)-a)$&$(ww^{17}, w^{10}2w^{11}, w^53,w^{14} w^{19}w^{15})$   & $[80,74,4]_{25}$ & $[[80,68,4]]_{25}$ & $[[80,64,4]]_{25}$ \cite{Verma21} \\
				
				$3$ & $(40,25)$ & $w(\theta(a)-a)$&$(w^{19}w^{10},w^{10}w^{14},w^{11} w^{17},w^{14}w^4)$   & $[120,116,3]_{25}$ & $[[120,112,3]]_{25}$ & $[[120,106,3]]_{25}$ \cite{Verma21} \\
				
				$3$ & $(30,8)$ & $w(\theta(a)-a)$&$(w^21, ww^4w^6, w^4w^3w,1w^2w^4)$   & $[120,114,4]_8$ & $[[120,108,4]]_8$ & $[[120,104,4]]_8$ \cite{Edel} \\
				$3$ & $(32,8)$ & $w(\theta(a)-a)$&$(w^6w^3w^4, w^2w^5w^5, ww^6,w^6w^5w^2)$   & $[128,121,4]_8$ & $[[128,114,4]]_8$ & $[[128,112,4]]_8$ \cite{Edel} \\
				\hline
			\end{tabular}\label{tab2}
		\end{center}
	\end{table}	
\end{landscape}
\noindent Also, by Lemma \ref{lemma css}, we construct quantum codes $[[n,k,d]]_q$ (in the fifth column), in which some codes satisfy the equality $n-k+2-2d=2$ (Near to MDS), and some are MDS (maximum-distance-separable). Let $\mathcal{C}=\displaystyle\bigoplus_{i=0}^{s}\zeta_i\mathcal{C}_i$ be a $(\gamma,\Delta)$-cyclic code of length $n$ over $\mathscr{R}_{q,s}$ where $\mathcal{C}_i=\langle g_i(x)\rangle$ is a $(\theta,\Im)$-cyclic code of length $n$ over $\mathbb{F}_q$ and $x^n-1=h_i(x)g_i(x)=g_i(x)h'_i(x)$ for some monic skew polynomials $g_i(x),h_i(x),h'_i(x)\in\mathbb{F}_q[x;\theta,\Im]$ for $i=0,1,2,\dots,s$. Further, let $h_i'(x)h_i'(x)$ is divisible by $x^n-1$ from the right for $i=0,1,2,\dots,s$. Therefore, by Theorem \ref{thm dual contain}, we get the dual containing codes with $\mathbb{F}_q$-parameters $[n,k,d]_q$ (enlisted in the fifth column of Table \ref{tab2}). Also, by Theorem \ref{thm quatum}, we construct quantum codes $[[n,k,d]]_q$ (in the sixth column), which beat the parameters of best-known codes (in the seventh column) given by the online database \cite{Edel,Verma21}. Also, the first and second columns represent $s$ and $(n,q)$, respectively. Moreover, in third column we present $\theta$-derivations $\Im(a)$ for $a\in\mathbb{F}_q$. Note that in fourth column we give generator polynomials $g_i$ for $\mathcal{C}_i$ ($i=0,1,2,\dots,s$) which is a right factor of $x^n-1$ in $\mathbb{F}_q[x;\theta,\Im]$. In order to make Table \ref{tab2} precise, we enlist the coefficients of polynomials in decreasing powers of $x$. For example, we write $w^702w$ to represent the polynomial $w^7x^3+2x+w$.
\subsection{Theory Behind the Encoding and Error Correction Procedure}\label{sec5.3}
\subsubsection{Encoding}
The dimension of the quantum code is $q^{2k-(s+1)n}$; hence, the remaining dimension \linebreak $q^{2(s+1)n-2k}$ corresponds to the redundancy. Let $\ket{\phi'}$ be the $(2k-(s+1)n)$ qudit message state.  We consider $(2(s+1)n-2k)$ qudits in state $\ket{0}$ each called the \textit{ancillary qudits} or \textit{ancilla qudits} that correspond to the redundancy that is added to the code. The encoding of the stabilizer quantum codes involves applying an operator $\mathcal{E}$ to the state $\ket{\phi'}\ket{0}^{\otimes (2(s+1)n-2k)}$. The encoding operator $\mathcal{E}$ is a product of operators from a group called the Clifford group. While working with basis operators, we need unitary operators that transform a basis operator to another basis operator,
called the Clifford operators \cite{JM}. The Clifford operators transform every Pauli basis operator into a Pauli basis operator. The
set of all Clifford operators forms the Clifford group that is
generated by the discrete Fourier transform $(\text{DFT}_q)$ operator,
phase shift operator, and the addition $(\text{ADD}_q)$ operator \cite{JM,Grassl03}.
\subsubsection{Error Correction}
Syndrome computation involves computing the syndrome based on the erroneous state $E\ket{\psi}$, where $E$ is an error that belongs to the Pauli basis $\mathcal{P}^{\otimes (s+1)n}$. We apply the syndrome computation operator that operates on $E\ket{\psi}$ along with $(2(s+1)n-2k)$  syndrome qudits in state $\ket{0}$ to transform it to $E\ket{\psi}\ket{s}$. Using the syndrome state $\ket{s}$ as the control and the codeword qudits as the target, the inverse error operation $E^{\dagger}$ is applied to obtain the codeword $\ket{\psi}$.

We next discuss the syndrome computation and error correction procedure when the error $E$ does not belong to the Pauli basis. The error $E$ belongs to $\mathbb{C}^{q^{(s+1)n}\times q^{(s+1)n}}$ as it is an $(s+1)n$ qudit operator; hence, $E$ can be expressed in terms of the Pauli basis $\mathcal{P}^{\otimes (s+1)n}$ as the Pauli basis is a basis for $\mathbb{C}^{q^{(s+1)n}\times q^{(s+1)n}}$.  Let
\begin{align}
&E = \sum_{B\in\mathcal{P}^{\otimes (s+1)n}} a_B B, \text{ where }a_B \in \mathbb{C},\nonumber\\&
\text{this implies}~~ E\ket{\psi} = \left(\sum_{B\in\mathcal{P}^{\otimes (s+1)n}} a_B B\right)\ket{\psi} = \sum_{B\in\mathcal{P}^{\otimes (s+1)n}} a_B B\ket{\psi}.
\end{align}

 Let us introduce $(2(s+1)n-2k)$ syndrome qudits in state $\ket{0}$, then, we obtain
 \begin{align}
 E\ket{\psi}\ket{0}^{\otimes (2(s+1)n-2k)} = \sum_{B\in\mathcal{P}^{\otimes (s+1)n}} a_B B\ket{\psi}\ket{0}^{\otimes (2(s+1)n-2k)}.
 \end{align}
 For the basis error $B$, the syndrome $\ket{s_B}$ is obtained based on the eigenvalues of the stabilizers with respect to $B\ket{\psi}$. Let $\mathscr{S}$ be the syndrome computation operator that transforms $B\ket{\psi}\ket{0}^{\otimes (2(s+1)n-2k)}$ to $B\ket{\psi}\ket{s_B}$. Then we operate $\mathscr{S}$ on $E\ket{\psi}\ket{0}^{\otimes (2(s+1)n-2k)}$, and  obtain
 \begin{align}
\mathscr{S} E\ket{\psi}\ket{0}^{\otimes (2(s+1)n-2k)} &= \mathscr{S}\left(\sum_{B\in\mathcal{P}^{\otimes (s+1)n}} a_B B\ket{\psi}\ket{0}^{\otimes (2(s+1)n-2k)}\right)\nonumber\\&= \sum_{B\in\mathcal{P}^{\otimes (s+1)n}} a_B \mathscr{S}\left(B\ket{\psi}\ket{0}^{\otimes (2(s+1)n-2k)}\right),\nonumber\\
&= \sum_{B\in\mathcal{P}^{\otimes (s+1)n}} a_B B\ket{\psi}\ket{s_B}.\label{eqn:SyndComp_mid1_bit}
 \end{align}
As $\ket{s_B}$s are of the form $\ket{s_1}\ket{s_2}\dots\ket{s_{(2(s+1)n-2k)}}$, they are orthogonal states for correctable errors. Thus, on measuring these syndrome qudits, the measurement outcome is $s_B = [s_1~s_2~\dots\\s_{(2(s+1)n-2k)}]$ for some $B$ with the post-measurement state being $B\ket{\psi}\ket{s_B}$. Also, using the syndrome $s_B$, the error is deduced, and the inverse error $B^{\dagger}$ is applied. Here, the syndrome qudits are discarded.

Alternatively, using control-based operations with the $(2(s+1)n-2k)$ syndrome qudits are control qudits and the codeword qudits as target qudits, the inverse error operator $B^{\dagger}$ is applied when the syndrome state is $\ket{s_B}$. Thus, the errors that are not Pauli basis errors are also corrected. We conclude that if we can correct a subset of Pauli basis errors, then we can correct errors that can be expressed as a linear combination of these errors.

\section{Conclusion}

In this paper, we have constructed many quantum codes over a class of finite commutative non-chain rings $\mathscr{R}_{q,s}$, with better parameters than the codes available in recent literature. Particularly, we have obtained $(\gamma,\Delta)$-cyclic codes using a set of idempotents over $\mathscr{R}_{q,s}$ and established results on their algebraic structure. Towards the construction of quantum codes, a necessary and sufficient condition to contain their dual codes has been established. Finally, we have obtained many better quantum codes. We have concluded our work by discussing the encoding and error correction capacity of our proposed quantum codes. However, exploring applications in the quantum computations of these codes is still open as future research work.

\section*{Acknowledgements}
The first and second authors are thankful to the Department of Science and Technology (DST), Govt. of India, for financial support under CRG/2020/005927, vide Diary No. SERB/F/6780/ 2020-2021 dated 31 December 2020 and Ref No. DST/INSPIRE/03/2016/001445, respectively.
\section*{Declarations}
\textbf{Data Availability Statement}: The authors declare that [the/all other] data supporting the findings of this study are available within the article. Any clarification may be requested from the corresponding author provided it is essential. \\
\textbf{Competing interests}: The authors declare that there is no conflict of interest regarding the publication of this manuscript.\\
\textbf{Use of AI tools declaration}
The authors declare that they have not used Artificial Intelligence (AI) tools in the creation of this manuscript.

	\end{document}